\theoremstyle{plain}
\newtheorem{theorem}{Theorem}
\newtheorem{lemma}{Lemma}
\newtheorem{prop}{Proposition}
\newtheorem{remark}{Remark}
\newtheorem{defi}{Definition}
\newcommand{\dfn}{\triangleq}
\newcommand{\given}{\mid}
\newcommand{\thetaSpace}{\Theta}
\DeclareMathOperator{\sign}{sgn}
\DeclareMathOperator{\trace}{tr}
\DeclareMathOperator{\expect}{\mathbb{E}}
\DeclareMathOperator{\gradient}{\nabla}
\DeclareMathOperator*{\argument}{\arg}
\newcommand*\interior[1]{#1^{\mathsf{o}}}
\newcommand{\VorEdge}{\partial\mathcal{V}}
\newcommand{\Vor}[1]{\mathcal{V}_{#1}}
\DeclareMathOperator{\EZvorOne}{\mathbb{E}_z^{\mathcal{V}_{1}}}
\DeclareMathOperator{\EZvorTwo}{\mathbb{E}_z^{\mathcal{V}_{2}}}
\DeclareMathOperator{\EZvorI}{\mathbb{E}_z^{\mathcal{V}_{i}}}
\DeclareMathOperator{\EZvorOneZeta}{\mathbb{E}_z^{\mathcal{V}_{1}(\hat{\bm{\zeta}})}}
\DeclareMathOperator{\EZvorTwoZeta}{\mathbb{E}_z^{\mathcal{V}_{2}(\hat{\bm{\zeta}})}}
\DeclareMathOperator{\PZvor}{\mathbb{P}_z}
\DeclareMathOperator{\PZvorOne}{\mathbb{P}_z(\mathcal{V}_1)}
\DeclareMathOperator{\PZvorTwo}{\mathbb{P}_z(\mathcal{V}_2)}
\newcommand{\Estimator}[1]{\bm{\hat\theta}^{(#1)}}
\newcommand{\transf}{\ensuremath{\bm{\tau}}\xspace}
\let\abs=\envert
\let\norm=\enVert
\newcommand{\paren}[1]{\left(#1\right)}
\newcommand{\gaus}[2]{\mathcal{N}\paren{#1,#2}}
\newcommand{\wHier}{w_\Hier}
\newcommand{\Heter}{\text{HT}}
\newcommand{\Hier}{\text{HI}}
\newcommand{\EstimSpace}{L^2(\mathcal{\bm{Z}})}
\newcommand{\EstimSpaceRealization}{\thetaSpace}  
\newcommand{\augmentedSpaceRealization}{\EstimSpaceRealization\times\EstimSpaceRealization}
\newcommand{\transp}{^T}
\newcommand{\givenMeas}{\given \bm{Z}}
\newcommand{\JZ}{J_Z}
\newcommand{\J}{J}
\newcommand{\JMS}{J_{\text{MS}}}
\newcommand{\Sset}{\mathcal{S}_{\hat{\bm{\theta}}}}
\newcommand{\thetaMS}{\hat{\bm{\theta}}_{\text{MS}}}
\newcommand{\PositiveReals}{\mathbb{R}_{>0}}
\newcommand{\JZrest}{\JZ\restriction_\mathcal{T}}
\renewcommand{\restriction}{\mathord{\upharpoonright}}
\begin{document}
\title{Double-Opportunity Estimation via Altruism}

\author{Nitai Stein and Yaakov Oshman,~\IEEEmembership{Fellow,~IEEE}%
  \thanks{The authors are with the Department of Aerospace
    Engineering, Technion---Israel Institute of Technology, Haifa
    32000, Israel. (E-mail: nitais@alumni.technion.ac.il;
    yaakov.oshman@technion.ac.il).}}

\maketitle

\begin{abstract}
  Based on the notion of altruism, we present an approach to
  cooperative parameter estimation in a system comprising two
  information-sharing agents. The underlying assumption is that the
  overall two-agent scheme can reach desired performance level even if
  only one of the agents performs satisfactorily, hence there exist
  two independent opportunities to estimate. The notion of altruism
  motivates a definition of cooperative estimation optimality that
  generalizes the common definition of minimum mean squared error
  optimality.  Fundamental equations are derived for two types of
  altruistic cooperative estimation problems, corresponding to
  heterarchical and hierarchical setups. Although these equations are,
  generally, hard to solve, their solution in the Gaussian case is
  straightforward and only entails the computation of the largest
  eigenvalue of the conditional covariance matrix and its
  corresponding eigenvector.  Moreover, in the Gaussian case the
  performance improvement of the two altruistic cooperative estimation
  techniques over the conventional (egoistic) estimation approach is
  shown to depend on the problem's dimensionality and statistical
  distribution. In particular, the performance improvement grows with
  the dispersion of the spectrum of the conditional covariance matrix,
  rendering the presented estimation approach especially appealing in
  ill-conditioned problems.  The validity of the solution in the
  Gaussian case is illustrated numerically.
\end{abstract}
\section{Introduction}\label{sec:intro}
Complex missions often involve a number of systems, or agents,
operating together as a team, to promote flexibility and robustness
and to improve overall performance. In such teamwork, it is highly
advantageous for the member agents constituting the team to be capable
of sharing information among themselves.  The need for improving
teamwork, as well as the capability to share information among team
members, have led to accelerated advances in the research of
cooperative estimation. Current cooperative estimation algorithms
differ in the way they handle shared information among the nodes of
the distributed system, but, in the end, in most cases a \emph{common
  team estimate} is computed, which is then used by the entire team.

In contradistinction to the prevailing team estimation concept, this
paper introduces a nonorthodox paradigm in cooperative parameter
estimation, whereby local estimates, that are generated by separate
(but information-sharing) agents, are --- by design --- not identical,
and they are not merged to form a unified estimate. Possibly even
sub-optimal in the standard, minimum mean-squared error (MMSE) sense,
these local estimates are designed, instead, to minimize together a
single, global, system-oriented cost.

To illustrate this cooperative estimation paradigm and its
applicability, consider the following scenario.  An attacker (say, an
aircraft equipped with high-precision guided missiles) is tasked with
destroying a static high-valued target whose precise location is not
deterministically known (say, a well-hidden rocket launcher). We
assume that the attacker has two launch opportunities, i.e., it can
fire two missiles at the target. We also assume that the attacker
knows (either through intelligence sources or via its self-acquired
measurements) the target location's distribution function. Clearly, as
long as it is eventually successful in destroying the target, the
attacker is completely indifferent as to how its mission is actually
accomplished, that is, which of its two launch opportunities is
successful in hitting the target. So, assuming that the attacker's two
missiles are identical, and that they rely on the same positional
information, how should they be aimed? If they are aimed at a common
point, say, the mean of the target location's distribution function
--- which would be an optimal aiming point from an estimation
perspective --- this would amount to wasting one missile (assuming
that both missiles operate flawlessly). Indeed, as will be shown later
in this paper, the attacker will benefit from dispersing its two
shooting opportunities, aiming them at different (possibly suboptimal
from an individual missile's estimation standpoint) locations, in
order to maximize its overall success probability.

The key assumption underlying the introduced cooperative estimation
paradigm is that the system encompasses an inherent redundancy, which,
in the scenario illustrated above, is embodied in the attacker having
two missiles.  Frequently called for to improve the probability of
success in critically important missions, redundancy is implemented by
using more than the minimal number of agents required to perform a
certain, global, task.  For example, in the theatre ballistic missile
defense world, several (identical) defending interceptors may be
launched at a single oncoming threat, if the defended target is so
valuable that it must be protected at all
cost~\cite{shaferman_oshman_JGCD:2016}.  While system redundancy
obviously contributes to immunity against local, subsystem failures,
it is proposed herein to exploit it in a seemingly unrelated
manner. Thus, we note that when the system comprises several identical
and information-sharing {agents}, it would be advantageous to put
aside the performance of each individual estimator {(each agent's
  estimate)} and, instead, to focus on enhancing the estimation
performance of the entire system in some global sense.  Such design
philosophy suggests an altruism-based cooperation, in which each agent
forgoes its own (egoistic) estimation performance in order to maximize
a global estimation performance measure.  A well-known term in nature
and in sociology, as well as in game theory, altruism means that an
individual sacrifices itself for the greater good of its species, or
in favor of other individuals, in order to improve the chances of its
species to thrive.  The use of altruism leads conceptually to a
min-min game, that is, a game where all players (belonging to one
side) cooperate such that there is one global goal to achieve, and the
optimizer aspires to minimize a cost function based on the minimal
cost among all players. The underlying notion is that the success of
the global mission is determined only by the performance of the most
successful individual among all {agents}. The interested reader is
referred to~\cite{hayoun_EuroGNC:2015,Russian_IFAC} for
perfect-information examples of such min-min games in the field of
missile guidance, where, however, the estimation problem was not
addressed.

Two approaches for cooperative parameter estimation based on the
notion of altruism are proposed herein.  Called {heterarchical
  altruistic cooperative estimation}, the first approach considers two
equally-ranked {agents} that take into account the action of each
other, and calculate their estimates fully altruistically such that
neither of the two is {necessarily} optimal (in the conventional
sense). Thus, both {agents} sacrifice their own estimation performance
in order to maximize the global mission's probability of success.
Termed {hierarchical altruistic cooperative estimation}, the second
approach is more conservative, in that it assumes that one of the
{agents} operates egoistically, as if it were the only estimator
present, thus minimizing the conventional mean-squared error (MSE)
criterion.  The second {agent} in this approach takes into account the
action of the first one, and maximizes the success of the global
mission, given the (egoistic) estimate of the first {agent}. Here the
term hierarchy alludes to the fact that the first {agent} works as if
it were ranked higher than the other, and tries to accomplish the
mission on its own.  Comparing the two approaches in performance, the
heterarchical one is superior to the hierarchical one, as the former
is globally optimal, whereas the latter results from a constrained
optimization. However, in some cases, design conservatism may dictate
the use of the hierarchical approach.

Returning to the scenario of the attacker and its two shooting
  attempts, assume now that the target's location is Gaussian
distributed on a line. We will show in this paper that an attacker
using the globally-optimal heterarchical approach would shift both of
its missiles' aiming points to the sides of the {target}'s expected
location (the mode of the distribution), thus maximizing the overall
probability of at least one of its missiles striking sufficiently
close to the target.  Alternatively, the more conservative {attacker,
  that implements the} hierarchical approach, would aim one of its
missiles at the expected location of the {target} (maximizing this
particular missile's probability of hitting the {target}), and would
shift the aiming point of the second missile aside (reasonably, the
second missile's aiming point's shift from the expected target
location would be larger than the respective shifts of both
heterarchical {attempts}).

The main contributions of this paper are the following:
\begin{enumerate}
\item {We generalize} the standard MMSE parameter estimation
  problem to the realm of cooperative estimation, in cases involving
  two separately-operating but information-sharing {agents}.
\item We introduce the concept of altruistic cooperative estimation,
  and, within this concept, we introduce two altruistic estimation
  approaches (heterarchical and hierarchical estimation), that
  outperform the standard, egoistic approach {in the {sense}
    considered herein}.
\item We prove the existence of a global solution for each of the two
  altruistic estimation approaches.
\item In the Gaussian case we provide closed-form, analytical
  solutions to the two altruistic estimation problems, along with a
  complete analysis of the solutions' potential performance
  advantages.
\end{enumerate}

The remainder of this paper is organized as follows.  The two
altruistic cooperative estimation problems are defined in
Section~\ref{sec:probdef}.  In Section~\ref{sec:est_general} we derive
necessary conditions for estimators corresponding to both problems,
and prove the existence of optimal estimators satisfying these
conditions.  In Section~\ref{sec:est_gauss} we address the Gaussian
case in detail.  Concluding remarks are offered in the last section.
Some technical derivations and proofs are deferred to Appendices.

\section{Problem Formulation}\label{sec:probdef}
Consider a random parameter vector $\bm{\theta}$ defined on the
probability space $(\thetaSpace,\mathcal{F},P)$, where
$\thetaSpace \subseteq \mathbb{R}^n$ is the continuous sample space,
$\mathcal{F}$ is the set of events ($\sigma$-algebra) on
$\thetaSpace$, and $P$ is a probability
measure.  The problem is to {find {two} estimates of} $\bm{\theta}$
based on the random vector of measurements $\bm{Z}$, which {are}
(possibly nonlinear) functions of $\bm{\theta}$.  The mapping $\bm{Z}$
induces the sample space $\mathcal{\bm{Z}} \subseteq \mathbb{R}^m$
(with an appropriate $\sigma$-algebra). Both $\thetaSpace$ and
$\mathcal{\bm{Z}}$ are Hilbert spaces, equipped with the $2$-norm
induced by the dot-product. {For later purpose,} we assume that the (known) joint
distribution of $\bm{\theta}$ and $\bm{Z}$ has finite first two
moments.

{We consider a scenario where} the system tasked with the estimation
problem comprises two {agents}, each of which {yields a local estimate
  of} the parameter vector $\bm{\theta}$ based on the {shared}
measurements $\bm{Z}$. {The system does not merge the two local
  estimates to a final, single estimate; rather, its overall
  performance {results}, in some manner, from the {joint}
  performance of the two estimators.}  Context-depending, we will use
the notation $\hat{\bm{\theta}}^{(1)}$ and $\hat{\bm{\theta}}^{(2)}$
to denote both the estimators and the estimates (generated by these
estimators) of the two {agents}, respectively.

{A cost function that reflects the idea of altruistic estimation
  is the following:}
\begin{equation}
\label{eq:CostFunctionDefinition}
\J{(\hat{\bm{\theta}}^{(1)},\hat{\bm{\theta}}^{(2)})}
\dfn
\expect (
\norm{\hat{\bm{\theta}}^{(1)}-\bm{\theta}}^2
\wedge
\norm{\hat{\bm{\theta}}^{(2)}-\bm{\theta}}^2
)
\end{equation}
where $\expect$ is the expectation operator, and
$a \wedge b \dfn \min(a,b)$ for some $a,b \in \mathbb{R}$.

The underlying premise of this work is that the global mission is
accomplished even if only one of the {agents} provides an
MSE-acceptable estimate. Thus, the overall system performance is
determined by the performance of the better {agent} among the two.

 \begin{remark}
   {Setting the two estimators to be identical in
   (\ref{eq:CostFunctionDefinition}) reduces it to the standard
   {MMSE} cost, manifesting the fact that
   the problem defined here is an extension of the MMSE estimation
   problem to the realm of altruistic cooperative estimation.}
 \end{remark}
 \begin{remark}
   Somewhat resembling the optimal sub-pattern assignment (OSPA)
   metric of \cite{OSPA,birds}, the cost
   (\ref{eq:CostFunctionDefinition}) as defined here is radically
   different due to the difference between the meanings of both
   problems. In the OSPA case, the best targets-to-estimates
   combination is chosen, based on the premise that targets are
   unlabeled, so that the problem is how to optimally ``throw two
   stones at two indistinguishable birds'', aiming at hitting both. In
   contradistinction, in the present work the problem is ``how to
   throw two stones at a single bird'', while maximizing the
   probability that at least one (unlabeled) stone hits its target.
\end{remark}
We define two altruistic estimation problems. In the heterarchical
problem, the estimators $\hat{\bm{\theta}}^{(1)}_\Heter$ and
$\hat{\bm{\theta}}^{(2)}_\Heter$ solve the global minimization problem
\begin{equation}
\label{eq:Heter:Definition}
\min_{\bm{\hat{\theta}}^{(1)},\bm{\hat{\theta}}^{(2)} \in \EstimSpace}
\J{(\hat{\bm{\theta}}^{(1)},\hat{\bm{\theta}}^{(2)})}
\end{equation}
where $\EstimSpace$ is the space of all square Lebesgue-integrable
(measurable) functions of the measurements.

A constrained version of the heterarchical problem, the hierarchical
problem sets one of the estimators, $\bm{\hat{\theta}}^{(1)}_\Hier$,
identical to the minimum mean squared error estimator (MMSEE),
$\thetaMS$. The second hierarchical altruistic estimator,
$\hat{\bm{\theta}}^{(2)}_\Hier$, solves the constrained minimization
problem
\begin{equation}
\label{eq:Hier:Definition}
\min_{\bm{\hat{\theta}}^{(2)} \in \EstimSpace}
J{(\hat{\bm{\theta}}^{(1)},\hat{\bm{\theta}}^{(2)})} \quad  
\text{such that  } \bm{\hat{\theta}}^{(1)} = \thetaMS.
\end{equation}

  Problems \eqref{eq:Heter:Definition} and \eqref{eq:Hier:Definition}
  are closely related to some well-studied optimization problems
  appearing in the Voronoi literature
  \cite{Voronoi_book,Voronoi_old,CVT}, generally called \emph{facility
    serviceability problems} \cite{Voronoi_old}. In these problems,
  there exist some points called \emph{facilities} (or, \emph{Voronoi
    generators}), that are said to supply some necessary resource. The
  optimization task is to localize these facilities inside a populated
  region, such that they generate a Voronoi tessellation which is
  optimal in some sense. The simplest example is the problem of
  \emph{public mail box localization}: given a city and its population
  distribution, the problem is to position a certain number of public
  mail boxes, assuming that every citizen in the city uses the closest
  public mail box. These problems appear in many scientific domains
  \cite{CVT}, such as data compression (e.g, in the image processing
  world), quantization, and distortion problems \cite{motion_coord} in
  the signal compression world.  However, to the best of the authors'
  knowledge, no closed-form solutions have been presented, perhaps
  because the literature focuses mainly on problems with many
  facilities, that require efficient numerical solutions, such as
  Lloyd's algorithm \cite{CVT,Voronoi_old,Voronoi_book}.


\section{Estimator Derivation}\label{sec:est_general}
Applying the smoothing theorem to the cost function
(\ref{eq:CostFunctionDefinition}) yields
\begin{equation}\label{eq: smoothing theorem}
\J =  \expect [  \expect ( 
\norm{\hat{\bm{\theta}}^{(1)}-\bm{\theta}}^2 \wedge
\norm{\hat{\bm{\theta}}^{(2)}-\bm{\theta}}^2
\givenMeas )]
 = \expect \JZ
\end{equation}
where
\begin{equation}\label{eq: definition of \JZ}
\JZ \dfn  \expect ( 
\norm{\hat{\bm{\theta}}^{(1)}-\bm{\theta}}^2 \wedge
\norm{\hat{\bm{\theta}}^{(2)}-\bm{\theta}}^2
\givenMeas ).
\end{equation}
Since the outer expectation in (\ref{eq: smoothing theorem}) does not
depend on the choice of the estimators, the global minimizing
arguments for $\J$ are identical to those of $\JZ$. We, thus, proceed
with minimizing $\JZ$.

Consider the function $a \wedge b$ for some $a,b\in\mathbb{R}$.
Clearly, in the region $a>b$, $a \wedge b = b$, so that the function
is not affected by the value of $a$.  Analogously, the space
$\thetaSpace$ can be divided into two subspaces, in each of which $\JZ$
is affected by only one of the two {estimates} -- the one closer to
any value of $\bm \theta$ in this subspace. This observation naturally
calls to mind the notion of Voronoi regions~\cite{CVT}, giving rise to
the following definition.

\begin{defi}[{Estimates'} Voronoi regions]
\label{defi: governance}
The Voronoi region of $\hat{\bm{\theta}}^{(1)}$,
denoted $\mathcal{V}_1$, is a set in $\thetaSpace$ such that:
\begin{equation} \label{eq:defi:D1}
\norm{\hat{\bm{\theta}}^{(1)}-\bm{\theta}}
<
\norm{\hat{\bm{\theta}}^{(2)}-\bm{\theta}}
\quad\forall \bm{\theta} \in \mathcal{V}_1.
\end{equation}
Analogously, $\mathcal{V}_2$ is defined to satisfy
\begin{equation} \label{eq:defi:D2}
\norm{\hat{\bm{\theta}}^{(1)}-\bm{\theta}}
>
\norm{\hat{\bm{\theta}}^{(2)}-\bm{\theta}}
\quad\forall \bm{\theta} \in \mathcal{V}_2.
\end{equation}
The boundary separating both Voronoi regions (the Voronoi edge),
denoted as $\VorEdge$, is defined to satisfy:
\begin{equation} \label{eq:defi:Dbar}
\norm{\hat{\bm{\theta}}^{(1)}-\bm{\theta}}
=
\norm{\hat{\bm{\theta}}^{(2)}-\bm{\theta}}
\quad\forall \bm{\theta} \in \VorEdge.
\end{equation}
\end{defi}
Notice that the two {estimates}, $\Estimator{1}$ and $\Estimator{2}$,
play the part of Voronoi generators, and that the regions $\Vor{1}$
and $\Vor{2}$ along with $\VorEdge$ constitute a Voronoi
tessellation\cite{CVT}.  The optimization problems
(\ref{eq:Heter:Definition}) and (\ref{eq:Hier:Definition}) are Voronoi
optimization problems, for, using the law of total probability, we can
express $\JZ$ as
 \begin{equation}
 \label{eq:CostFunctionVoronoiForm}
 \JZ = \sum_{i=1}^{2} \expect (
   \norm{\hat{\bm{\theta}}^{(i)}-\bm{\theta}}^2
   \givenMeas,\bm{\theta} \in \mathcal{V}_i 
 )
 \Pr \left(\bm{\theta} \in \mathcal{V}_i \givenMeas \right)
 \end{equation}
 where we have used the fact that $\VorEdge \subset \thetaSpace$ has
 measure zero, as $ \dim \VorEdge = \dim \thetaSpace - 1$ (because
 $\VorEdge$ satisfies a constraint in $\thetaSpace$).

 The heterarchical problem stated in (\ref{eq:Heter:Definition}) is
 the classical Voronoi facility serviceability
 problem\cite{Voronoi_old}. For given measurements, the estimates are
 the facilities, located in $\mathbb{R}^n$; the probability
 distribution of $\bm{\theta}$ serves as the population distribution,
 and each individual (random realization of $\bm{\theta}$) is
 associated with the estimate closest to it.
 
 The hierarchical problem stated in (\ref{eq:Hier:Definition}) is a
 special case of the problem stated in \cite[Section
 9.2.4]{Voronoi_book}. To see this, set the number of facilities to
 two (represented by the two {estimates}), the number of ranks to two
 (one egoistic {estimate} and one altruistic), and the consumption rate
 for each supply to half, such that both are equally consumed. In that
 case, it is obvious that the location of the higher ranked facility
 should be the MMSEE, since it is the only facility supplying this
 service to the entire population. However, this higher-ranked
 facility, whose location is already set, supplies also the second
 service which the other facility supplies as well. Hence, it is
 obvious that the optimizer should locate the lower-ranked facility
 according to the global mission of serviceability, taking into
 consideration the location of the higher-ranked facility.

 For these kinds of problems, \cite{CVT} proves that the optimal
 solutions lead to {centroidal Voronoi tessellations} (CVT), in which
 the facilities are the centroids of their corresponding Voronoi
 regions. Requiring the domain in which the problem is defined to be a
 compact subset of $\mathbb{R}^n$, \cite[p. 651-652]{CVT} proves the
 existence of a globally-optimal solution, that consists of a set of
 non-identical facilities.  The localization problems addressed in the
 literature are commonly solved numerically, perhaps because most of
 them involve a large number of
 facilities~\cite{CVT,Voronoi_old,Voronoi_book}.  Related to
   what would be called the 2-facilities problem in the Voronoi
   literature, our work extends already known results by proving the
   existence of a globally-optimal solution in not necessarily compact
   domains, and by providing a {closed-form} solution in the Gaussian
   case.
\subsection{Preliminary Calculations}
Because we {allow} unconstrained {estimates}, the minimizers of $\JZ$
are those for which either the gradient vector vanishes, or the cost
function is not differentiable. {The function $a \wedge b$ for some
  $a,b\in\mathbb{R}$ is differentiable everywhere with respect to both
  $a$ and $b$, except at $a=b$. Similarly, $\JZ$ is not differentiable
  {with respect to either of the two estimates} only in the trivial
  case $\hat{\bm{\theta}}^{(1)} = \hat{\bm{\theta}}^{(2)}$, which is
  not of interest here (one can always do better by dispersing the two
  {estimates}; see Lemma 2 in
    Appendix~\ref{sec:app_proof_opt_heter}). We, thus, seek for
  optimal solutions rendering
  $\hat{\bm{\theta}}^{(1)} \neq \hat{\bm{\theta}}^{(2)}$.  Since, for
  such solutions, $\JZ$ is differentiable, we will derive the
  necessary conditions by setting its gradient to zero. Notice
    that because $\JZ$ is quadratic (assuming
    $\hat{\bm{\theta}}^{(1)} \neq \hat{\bm{\theta}}^{(2)}$), its
    extremum is necessarily a minimum.

To compute the gradient
of $\JZ$ we first rewrite \eqref{eq:CostFunctionVoronoiForm} as

\begin{align}\label{eq:Lemma_AAAomprised}
  \JZ(\Estimator{1},\Estimator{2})     = \EZvorOne (
  \norm{\hat{\bm{\theta}}^{(1)}-\bm{\theta}}^2
  ) \PZvorOne
  + \EZvorTwo ( 
  \norm{\hat{\bm{\theta}}^{(2)}-\bm{\theta}}^2
  ) \PZvorTwo
\end{align}
where the (probabilistic) measure of Voronoi region $i$ and the local
expectation operator associated with that region are defined,
respectively, as
\begin{gather}\label{eq: def of EDi and PDi}
  \PZvor(\mathcal{V}_i) \dfn \Pr \left(\bm{\theta} \in \mathcal{V}_i \givenMeas \right), \quad i=1,2\\
\intertext{and}
\label{eq: EVi def}
\EZvorI(\cdot) \dfn \expect (\cdot \givenMeas, 
\bm{\theta} \in \mathcal{V}_i), \quad i=1,2.
\end{gather}
Let $\delta(\cdot)$ denote an infinitesimal perturbation of
  $(\cdot)$.  Arbitrarily perturbing the first {estimate} to
$\Estimator{1} + \delta\Estimator{1}$ while keeping the second
{estimate} intact, results in a corresponding infinitesimal change in
the Voronoi tessellation.  In turn, this results in a perturbation in
the cost,
\begin{align}
  \label{eq:cost_change}
  \delta\JZ
      =
    \delta( \EZvorOne (
  \norm{\hat{\bm{\theta}}^{(1)}-\bm{\theta}}^2
  ) \PZvorOne )  
 + \delta ( \EZvorTwo ( 
  \norm{\hat{\bm{\theta}}^{(2)}-\bm{\theta}}^2
  ) \PZvorTwo ).
\end{align}
The infinitesimal change in the tessellation affects both terms
  on the RHS of \eqref{eq:cost_change}, but it does so in an
  antisymmetric manner, as the change in $\VorEdge$, the boundary
  separating the two Voronoi regions, induces oppositely signed
  changes of both terms. Letting $\delta\Estimator{1} \to {\bm{0}}$
  nullifies the change in the tessellation, and its total effect on
  the perturbation of the cost. The other effect contributing to the
  perturbation of the cost is the change (integrated over all points
  $\bm{\theta}\in\Vor{1}$) in the norm
  $\norm{\hat{\bm{\theta}}^{(1)}-\bm{\theta}}$ due to the change in
  $\Estimator{1}$. Thus,
\begin{equation}
  \label{eq:cost_change_2}
  \delta\JZ = 2 \EZvorOne (\hat{\bm{\theta}}^{(1)}-\bm{\theta})^T
  \delta\Estimator{1} \PZvorOne.
\end{equation}
By symmetry, \eqref{eq:cost_change_2} yields the gradients of $\JZ$ as
\begin{equation}\label{eq: grad \JZ}
  \gradient_{\hat{\bm{\theta}}^{(i)}} \JZ
  =  2 \PZvor(\mathcal{V}_i) \EZvorI
  (\hat{\bm{\theta}}^{(i)}-\bm{\theta}),  \quad   i=1,2.
\end{equation}
\begin{remark}
  Addressing the multi-dimensional case, \eqref{eq: grad \JZ} was also
  derived in \cite{Voronoi_book,Voronoi_old}, {albeit} in a
  deterministic setting.
\end{remark}

Next we show that there exists a rotation transformation, that, when
applied to the parameter space $\Theta$, maps the Voronoi regions of
both {estimates} to one-dimensional, half-infinite intervals. This
transformation will facilitate the ensuing derivation of the
altruistic estimators. 
Defining
\begin{equation}\label{eq:Lemma_Deltas_definition}
\Delta \hat{\bm{\theta}} \dfn
 \hat{\bm{\theta}}^{(2)} - \hat{\bm{\theta}}^{(1)}
\end{equation}
the Voronoi edge equation, (\ref{eq:defi:Dbar}), can be written as
\begin{equation} \label{eq: Lemma_Dbar}
\langle \bm{\theta}-\frac{\hat{\bm{\theta}}^{(1)}+\hat{\bm{\theta}}^{(2)}}{2}
,
\Delta\hat{\bm{\theta}} \rangle
=
0
\end{equation}
where $\langle \cdot , \cdot \rangle$ stands for the inner product.
Similarly, the definitions (\ref{eq:defi:D1}) and (\ref{eq:defi:D2})
of the {estimates'} Voronoi regions can be written {(for any
  $\hat{\bm{\theta}}^{(1)}$ and $\hat{\bm{\theta}}^{(2)}$)} as
\begin{gather} \label{eq:voronoi_1}
\langle \bm{\theta}-\frac{\hat{\bm{\theta}}^{(1)}+\hat{\bm{\theta}}^{(2)}}{2},
\Delta\hat{\bm{\theta}} \rangle
< 0\\
\intertext{and}
\langle \bm{\theta}-\frac{\hat{\bm{\theta}}^{(1)}+\hat{\bm{\theta}}^{(2)}}{2},
\Delta\hat{\bm{\theta}} \rangle
> 0
\end{gather}
respectively.  Equation \eqref{eq: Lemma_Dbar} means that the boundary
$\VorEdge$ is an $(n-1)$-dimensional plane orthogonal to
$\Delta\hat{\bm{\theta}}$, that contains the point
$\frac{\hat{\bm{\theta}}^{(1)}+\hat{\bm{\theta}}^{(2)}}{2}$, the
mid-point between the two {estimates}. The two Voronoi regions are
located on opposite sides of the boundary.

Now let \transf be an $n \times n$ proper orthogonal matrix having
$\Delta \hat{\bm{\theta}}\transp/\norm{\Delta \hat{\bm{\theta}}}$ as
its first row ($\norm{\Delta \hat{\bm{\theta}}}$ cannot vanish
because, as explained earlier, we disregard the case
$\hat{\bm{\theta}}^{(1)} = \hat{\bm{\theta}}^{(2)}$).  Define
\begin{align}
  \bm{u} & \dfn \transf \bm{\theta}   \label{eq:def_u}\\  
  \hat{\bm{u}}^{(i)} & \dfn \transf \hat{\bm{\theta}}^{(i)}, \quad i=1, 2\label{eq:uhatdef}     \\
  \Delta\hat{\bm{u}} & \dfn  \hat{\bm{u}}^{(2)} - \hat{\bm{u}}^{(1)}\label{eq:deltauhatdef}.
\end{align}
Rotating the standard basis of the space $\Theta$ using the
transformation \transf, let $\bm{e_{u_1}}$ be a unit vector along the first
basis vector of the rotated space.  Using
\eqref{eq:Lemma_Deltas_definition} and \eqref{eq:uhatdef} in
\eqref{eq:deltauhatdef}, and recalling the special construction of the
orthogonal matrix \transf, yields
\begin{equation}\label{eq:Delta_u_hat definition}
  \Delta\hat{\bm{u}} = \norm{\Delta \hat{\bm{\theta}}} \bm{e_{u_1}}.
\end{equation}
Because $\bm{e_{u_1}}$ is collinear with $\Delta\hat{\bm{u}}$, the
vector connecting both transformed {estimates}, we call it the
{solution-axis}.

Using $\transf^T \transf = \bm{I}$ in \eqref{eq:voronoi_1} along with
\eqref{eq:Delta_u_hat definition} and the definitions \eqref{eq:def_u}
and \eqref{eq:uhatdef}, the Voronoi region of the first {estimate}
can be expressed as
\begin{equation}\label{eq:Lemma_D1 in eu basis}
\mathcal{V}_1 = \{ \bm{u} \in \thetaSpace:
\langle \bm{u}-\frac{\hat{\bm{u}}^{(1)}+\hat{\bm{u}}^{(2)}}{2},
\norm{\Delta \hat{\bm{\theta}}} \bm{e_{u_1}} \rangle <0 \}.
\end{equation}
Let $\hat{u}^m$ be the projection of the midpoint between the two
{estimates} in the transformed space on $\bm{e_{u_1}}$, that is
\begin{equation}\label{eq:umdef}
  \hat{u}^m \dfn \frac{1}{2}(\hat{\bm{u}}^{(1)} + \hat{\bm{u}}^{(2)})^T \bm{e_{u_1}}.
\end{equation}
Using \eqref{eq:umdef} in \eqref{eq:Lemma_D1 in eu basis} yields
\begin{equation}\label{eq: D1 is 1D in u}
\mathcal{V}_1 = \{\bm{u} \in \thetaSpace : u_1 < \hat{u}^m \}
\end{equation}
where $u_1$ denotes the first component of the vector $\bm{u}$ (its
projection along $\bm{e_{u_1}}$). Stated in words, in the transformed
space, $\mathcal{V}_1$ is the half-infinite open interval
$\left(-\infty,\hat{u}^m\right)$ along the solution axis. Similarly,
the Voronoi region of the second {estimate} in the transformed space is
the half-infinite open interval $\left(\hat{u}^m,\infty\right)$ along
the solution axis.

Having these preliminary results on hand, we now proceed with the
derivation, considering separately each of the altruistic approaches.
\subsection{Heterarchical Altruistic Estimation}
Setting the gradient of $\JZ$ [expressed in \eqref{eq: grad
    \JZ}] to zero yields
\begin{equation}\label{eq:Heter:eq_from_first_derivatives_prelim}
  \hat{\bm{\theta}}^{(i)} = \EZvorI\bm{\theta}, \quad i=1,2
\end{equation}
which shows that an optimal heterarchical {estimate} is locally
MMSE-optimal inside its Voronoi region. As such, it inherits the
properties of MMSE estimators inside that region. We have thus shown
that the two heterarchical estimators yield a CVT of $\thetaSpace$
(this has also been shown, for other Voronoi problems of similar
nature, in, e.g., \cite{CVT}).

Equations \eqref{eq:Heter:eq_from_first_derivatives_prelim} are
coupled and, generally, hard to solve in closed form. Efficient
algorithms for their \emph{numerical} solution {(even in the general
  case involving more than two facilities, e.g., Lloyd's algorithm)}
can be found in \cite{Voronoi_book,Voronoi_old,CVT}.
\begin{remark}
  Using the law of total probability and
  (\ref{eq:Heter:eq_from_first_derivatives_prelim}) yields
\begin{align}\label{eq:remark:MMSE generalization}
\expect (\bm{\theta}\givenMeas)  =
\PZvor(\mathcal{V}_1) \EZvorOne\bm{\theta}
+
\PZvor(\mathcal{V}_2) \EZvorTwo\bm{\theta}   =
\PZvor(\mathcal{V}_1) \hat{\bm{\theta}}^{(1)}
+
\PZvor(\mathcal{V}_2) \hat{\bm{\theta}}^{(2)}
\end{align}
which generalizes the fundamental theorem of MMSE estimation.  This
should come as no surprise, as the cost function
\eqref{eq:CostFunctionDefinition} generalizes the MSE cost function.
In fact, as shown in {Appendix~}\ref{sec:app_proof_opt_heter}, when the
norm-difference between the {estimates} tends to infinity, the
Voronoi region of the {estimate} possessing the larger norm tends
to a set of measure zero, whereas the other Voronoi region tends to a
set of measure one. In that case, (\ref{eq:remark:MMSE
  generalization}) yields that the (single) MMSE estimator is the
familiar global conditional mean.
\end{remark}

Sometimes it might be advantageous to calculate first $\hat u^m$, the
midpoint between {estimates} along the solution axis. To do that we use
(\ref{eq:def_u}) in (\ref{eq:Heter:eq_from_first_derivatives_prelim})
to obtain the following alternative form of
\eqref{eq:Heter:eq_from_first_derivatives_prelim}
\begin{equation}\label{eq:Heter:eq_from_first_derivatives}
\hat{\bm{\theta}}^{(i)} =
\transf\transp \EZvorI\bm{u}, 
\quad i=1,2.
\end{equation}
Now using \eqref{eq:uhatdef} in \eqref{eq:umdef} and substituting
\eqref{eq:Heter:eq_from_first_derivatives} results in the following
scalar equation
\begin{equation}\label{eq:Heter:altruism equation prelim}
  \hat{u}^m 
  = \frac{1}{2} (\EZvorOne u_1 + \EZvorTwo u_1)
\end{equation} 
which only depends on the marginal, conditional distribution of $u_1$.

Hereafter referred to as the {heterarchical altruism equation},
equation (\ref{eq:Heter:altruism equation prelim}) follows naturally
from the symmetric definition of the heterarchical estimation problem.
Using (\ref{eq: D1 is 1D in u}), \eqref{eq:Heter:altruism equation
  prelim} can be rewritten as
\begin{equation}\label{eq:Heter:altruism equation}
  \hat{u}^m
  = 
  \frac{1}{2} [\expect(u_1 \mid u_1<\hat{u}^m, \givenMeas) + \expect(u_1 \mid u_1>\hat{u}^m, \givenMeas)]
\end{equation}
revealing its dependence on the {truncated distribution} of $u_1$
given $\bm{Z}$ \cite[Chapter 22]{Greene:2012}.  We will use this
equation in solving the Gaussian case (Section \ref{sec:est_gauss}).

Finally, although the cost function can have an unbounded domain and
is not everywhere differentiable, we can still say something about the
existence of globally-optimal solutions. We do this in the next
theorem, proven in Appendix~{\ref{sec:app_proof_opt_heter}}. {In passing,
  we note that a similar theorem, related to the general case albeit
  assuming a compact parameter domain, is proven
  in~\cite[pp.~651-652]{CVT}.}
\begin{theorem}\label{theo: Heter optimality}
  There exists at least one globally-optimal heterarchical
  solution. All such solutions satisfy
  (\ref{eq:Heter:eq_from_first_derivatives_prelim}), and their
  (identical) cost is smaller than the MMSE.
\end{theorem}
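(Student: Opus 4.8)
The plan is to carry out the three assertions---existence, the centroid characterization (\ref{eq:Heter:eq_from_first_derivatives_prelim}), and the strict inequality against the MMSE---working throughout with the conditional cost $\JZ$ rather than $\J$. As noted after (\ref{eq: definition of \J}), the outer expectation is estimator-independent, so a global minimizer of $\J$ is obtained by minimizing $\JZ(\Estimator{1},\Estimator{2})$ over $(\Estimator{1},\Estimator{2})\in\mathbb{R}^n\times\mathbb{R}^n$ for $\PZvor$-almost every value of $\bm Z$ and then assembling the pointwise minimizers into functions of $\bm Z$. For fixed $\bm Z$ the integrand $\norm{\Estimator{1}-\bm\theta}^2\wedge\norm{\Estimator{2}-\bm\theta}^2$ is continuous in $(\Estimator{1},\Estimator{2})$ and, on any bounded set, is dominated by $C(1+\norm{\bm\theta}^2)$; since the conditional second moment is finite, dominated convergence renders $\JZ$ jointly continuous. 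The substance of the theorem therefore lies in attaining the infimum despite $\bm\Theta$ being non-compact, which is precisely what places the problem outside the scope of \cite[pp.~651--652]{CVT}.

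The main obstacle, and the crux of the argument, is this non-compactness: minimizing sequences could in principle escape to infinity. I would rule this out by first recording that the infimum lies strictly below the conditional MMSE. Evaluating $\JZ$ at the degenerate configuration $\Estimator{1}=\Estimator{2}=\expect(\bm\theta\given\bm Z)$ gives exactly $\text{MMSE}_Z\dfn\expect(\norm{\expect(\bm\theta\given\bm Z)-\bm\theta}^2\given\bm Z)$, while Lemma~\ref{lem: Heter: min of \J in the ball} shows that a small dispersal of the two estimators strictly decreases the cost; hence $\inf\JZ<\text{MMSE}_Z$. Now suppose, along some sequence, $\max(\norm{\Estimator{1}_k},\norm{\Estimator{2}_k})\to\infty$. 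Passing to a subsequence, there are two cases. If both norms diverge, then $\min_i\norm{\Estimator{i}_k-\bm\theta}\to\infty$ for every $\bm\theta$, so Fatou forces $\JZ\to\infty$. If only one diverges, say $\Estimator{1}_k$, with $\Estimator{2}_k\to\Estimator{2}_\ast$ bounded, then the integrand converges pointwise to $\norm{\Estimator{2}_\ast-\bm\theta}^2$ (the far estimator's Voronoi region collapses to measure zero, as anticipated in the Remark following (\ref{eq:remark:MMSE generalization})), and Fatou gives $\liminf_k\JZ\ge\expect(\norm{\Estimator{2}_\ast-\bm\theta}^2\given\bm Z)\ge\text{MMSE}_Z$. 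Either way $\liminf_k\JZ\ge\text{MMSE}_Z>\inf\JZ$, so no escaping sequence can be minimizing; every minimizing sequence is confined to a fixed compact ball.

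With minimizing sequences trapped in a compact set, continuity and the Weierstrass theorem yield a minimizer $(\Estimator{1}_\ast,\Estimator{2}_\ast)$, establishing existence; measurability of the selected minimizer in $\bm Z$ follows from a standard measurable-selection argument, and the uniform ball bound keeps it in $\EstimSpace$. Since $\inf\JZ<\text{MMSE}_Z$, no global minimizer can have $\Estimator{1}_\ast=\Estimator{2}_\ast$---that would force $\JZ\ge\text{MMSE}_Z$---so the two optimal estimators are distinct, $\JZ$ is differentiable there, and the stationarity computation leading to (\ref{eq: grad \JZ}) applies; setting the gradients to zero yields precisely the centroid conditions (\ref{eq:Heter:eq_from_first_derivatives_prelim}) for \emph{every} global minimizer. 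Because all global minimizers achieve the same infimum by definition, their cost is identical, and integrating over $\bm Z$ gives $\J=\expect\JZ<\expect\text{MMSE}_Z=\text{MMSE}$, completing the final claim. I expect the boundedness step to be the only genuinely delicate point; the remaining pieces are routine once the escaping-sequence behaviour is understood.
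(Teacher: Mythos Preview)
Your argument is correct and shares the paper's architecture: strict improvement over the MMSE by dispersing the two estimators, an escape bound showing that far-away configurations cannot undercut the MMSE, and then compactness plus Weierstrass to land on a minimizer that must be off-diagonal and hence satisfies the centroid conditions. The substantive difference lies in how the escape bound is obtained. The paper proves it quantitatively (its Lemma~\ref{lem: Heter: \J>MMSE-eps}) through two explicit propositions---one for the case where both estimator norms exceed a threshold, the other for the case where the norm gap $\mathcal{D}=\norm{\Estimator{2}}-\norm{\Estimator{1}}$ is large---producing a concrete radius $a(\epsilon)$ outside of which $J>\JMS-\epsilon$. Your Fatou argument reaches the same conclusion with much less machinery: pointwise limits plus nonnegativity give $\liminf\JZ\ge\text{MMSE}_Z$ directly along any escaping subsequence, at the cost of being non-constructive. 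You also make the conditional-then-reassemble structure explicit (work with $\JZ$ for fixed $\bm Z$, then invoke measurable selection), which the paper's appendix leaves somewhat implicit in its notation. One small caution: the phrase ``uniform ball bound'' overstates what your escape argument delivers, since the confining radius may depend on $\bm Z$ through the conditional law; the $\EstimSpace$-membership of the assembled minimizers therefore deserves a separate line rather than being folded into that remark.
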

\subsection{Hierarchical Altruistic Estimation}
Recall that, in this case, the first hierarchical estimator is the
MMSEE, so that only the second estimator needs to be found. Setting
the value of the gradient of $\JZ$ with respect to the second
{estimate} [expressed in \eqref{eq: grad \JZ}] to zero yields
\begin{equation}\label{eq:Hier:eq_from_first_derivatives_prelim}
  \hat{\bm{\theta}}^{(2)} = \EZvorTwo\bm{\theta} 
\end{equation}
rendering the second hierarchical {estimate} locally MMSE-optimal with
respect to its Voronoi region.

To calculate $\hat{u}^m$ for the hierarchical {estimates} we use
(\ref{eq:def_u}) in (\ref{eq:Hier:eq_from_first_derivatives_prelim})
to yield
\begin{equation}\label{eq:Hier:eq_from_first_derivatives}
\hat{\bm{\theta}}^{(2)} =
\transf\transp \EZvorTwo\bm{u}.
\end{equation}
Using \eqref{eq:uhatdef} in \eqref{eq:umdef} and substituting
\eqref{eq:Hier:eq_from_first_derivatives} yields
\begin{equation}\label{eq:Hier:altruism equation prelim}
\hat{u}^m
= \frac{1}{2} [\transf\expect( \bm{\theta}\givenMeas)
+\transf\hat{\bm{\theta}}^{(2)}]\transp \bm{e_{u_1}} 
= 
\frac{1}{2} [\expect( u_1\givenMeas)
+
\EZvorTwo u_1
].
\end{equation}

Equation (\ref{eq:Hier:altruism equation prelim}), which is a scalar
equation, is referred to as the {hierarchical altruism
  equation}. Notice that the first expectation on the RHS of
(\ref{eq:Hier:altruism equation prelim}) is with respect to the entire
sample space $\thetaSpace$ {(conditioned on $Z$)}, which follows
naturally from the fact that the first hierarchical estimator is the
MMSEE.

Using (\ref{eq: D1 is 1D in u}), (\ref{eq:Hier:altruism equation
  prelim}) can be rewritten as
\begin{equation}\label{eq:Hier:altruism equation}
  \hat{u}^m
  =
  \frac{1}{2} [ \expect(u_1 \givenMeas) 
+ \expect(u_1 \mid  u_1>\hat{u}^m, \givenMeas) ],
\end{equation}
revealing its dependence on the truncated conditional distribution of
$u_1$\cite[Chapter 22]{Greene:2012}.

We conclude with the next theorem, the proof of which is deferred to
Appendix~{\ref{sec:app_proof_opt_hier}}.
\begin{theorem}\label{theo: Hier optimality}
  There exists at least one globally-optimal hierarchical
  solution. All such solutions satisfy
  (\ref{eq:Hier:eq_from_first_derivatives_prelim}), and their
  (identical) cost is smaller than the MMSE.
\end{theorem}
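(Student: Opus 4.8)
The plan is to reduce all three assertions to the pointwise minimization of $\JZ$ over the free estimate $\hat{\bm{\theta}}^{(2)}(\bm{Z})\in\EstimSpaceRealization$, holding $\hat{\bm{\theta}}^{(1)}=\thetaMS$ fixed as dictated by (\ref{eq:Hier:Definition}). Since $\J=\expect\JZ$ with the outer expectation independent of the estimators, a measurable pointwise minimizer of $\JZ$ assembles into a global minimizer, so for each realization of $\bm{Z}$ it suffices to treat a finite-dimensional problem on $\EstimSpaceRealization$. I would first record the reference value $\JZ^{\mathrm{MMSE}}\dfn\expect(\norm{\thetaMS-\bm{\theta}}^2\given\bm{Z})$, attained when $\hat{\bm{\theta}}^{(2)}=\thetaMS$ collapses the tessellation so that $\Vor{2}$ is null; this is the value the theorem asks us to beat.

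For the strict improvement I would invoke the dispersion mechanism of Lemma~\ref{lem: Heter: min of \J in the ball}: perturb $\hat{\bm{\theta}}^{(2)}=\thetaMS+\epsilon\bm{v}$ along a unit direction $\bm{v}$ of positive conditional variance and expand $\JZ-\JZ^{\mathrm{MMSE}}=\expect(\,[-2\epsilon\langle\bm{v},\bm{\theta}-\thetaMS\rangle+\epsilon^2]\,\mathbf{1}_{\bm{\theta}\in\Vor{2}}\given\bm{Z})$. Every $\bm{\theta}\in\Vor{2}$ obeys $\langle\bm{v},\bm{\theta}-\thetaMS\rangle>\epsilon/2$, so the bracket is strictly negative on $\Vor{2}$; as long as the conditional law of $\bm{\theta}$ is non-degenerate, $\PZvorTwo>0$ for small $\epsilon$ and the cost drops strictly below $\JZ^{\mathrm{MMSE}}$. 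Taking expectation over $\bm{Z}$ then yields a feasible hierarchical pair with total cost strictly below the MMSE.

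For existence I would argue by coercivity against this MMSE ceiling. Writing $\hat{\bm{\theta}}^{(2)}=\thetaMS+t\bm{v}$ and decomposing $\bm{\theta}-\thetaMS$ into its component $s\bm{v}$ along $\bm{v}$ and its orthogonal remainder, the identity $\norm{\hat{\bm{\theta}}^{(2)}-\bm{\theta}}^2=(t-s)^2+\norm{\bm{\theta}-\thetaMS-s\bm{v}}^2$ together with $s>t/2$ on $\Vor{2}$ gives the bound $\int_{\Vor{2}}\norm{\hat{\bm{\theta}}^{(2)}-\bm{\theta}}^2\,dP\le 10\,\expect(\norm{\bm{\theta}-\thetaMS}^2\,\mathbf{1}_{\norm{\bm{\theta}-\thetaMS}>t/2}\given\bm{Z})$, whose right-hand side vanishes as $t\to\infty$ uniformly in $\bm{v}$ by dominated convergence and the assumed finiteness of the second moment. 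Hence $\JZ\to\JZ^{\mathrm{MMSE}}$ as $\norm{\hat{\bm{\theta}}^{(2)}}\to\infty$; combined with the sub-MMSE value already exhibited and the continuity of $\JZ$ in $\hat{\bm{\theta}}^{(2)}$, the infimum is attained inside a compact ball, and a measurable selection of pointwise minimizers produces a globally-optimal hierarchical estimator.

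It remains to see that every optimal solution satisfies (\ref{eq:Hier:eq_from_first_derivatives_prelim}). At each realization where improvement is possible, optimality forces $\hat{\bm{\theta}}^{(2)}\neq\thetaMS$, for the collapsed configuration realizes the strictly larger value $\JZ^{\mathrm{MMSE}}$; there $\JZ$ is differentiable and setting $\gradient_{\hat{\bm{\theta}}^{(2)}}\JZ=0$ in (\ref{eq: grad \JZ}) returns exactly $\hat{\bm{\theta}}^{(2)}=\EZvorTwo\bm{\theta}$. The common optimal cost is therefore strictly below the MMSE, completing all three claims. The step I expect to be the main obstacle is the coercivity estimate: although $\PZvorTwo\to0$ as the generator recedes, the integrand $\norm{\hat{\bm{\theta}}^{(2)}-\bm{\theta}}^2$ diverges like $t^2$ on the receding half-space, so the product's decay is not automatic and must be extracted from the truncated-second-moment bound above rather than from Markov's inequality applied to $t^2\,\PZvorTwo$ alone.
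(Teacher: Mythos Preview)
Your proposal is correct and follows the same three-step skeleton as the paper---strict improvement over the MMSE by dispersing $\hat{\bm{\theta}}^{(2)}$ away from $\thetaMS$, a coercivity statement forcing the cost back toward the MMSE level at infinity, and a compactness argument yielding an interior minimizer where the gradient identity~(\ref{eq: grad \JZ}) applies. The paper packages these as Lemmas~\ref{lem: Hier: \J>MMSE-eps} and~\ref{lem: Hier: min of \J in the ball}, which are simply the restricted forms of the heterarchical Lemmas~\ref{lem: Heter: \J>MMSE-eps} and~\ref{lem: Heter: min of \J in the ball}; the coercivity in the paper comes from Proposition~\ref{prop:D} rather than your truncated second-moment bound.

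Two technical differences are worth noting. First, you work pointwise in $\bm{Z}$ on $\JZ$ and then invoke a measurable selection to assemble the optimizer, whereas the paper argues directly at the level of $\J$ on $\augmentedSpaceRealization$; both are legitimate, but your route needs that extra selection step to be made precise. Second, your anticipated ``main obstacle'' dissolves more easily than you suggest: on $\Vor{2}$ one has $\norm{\hat{\bm{\theta}}^{(2)}-\bm{\theta}}^2<\norm{\thetaMS-\bm{\theta}}^2$ by the very definition~(\ref{eq:defi:D2}), so
\[
\int_{\Vor{2}}\norm{\hat{\bm{\theta}}^{(2)}-\bm{\theta}}^2\,dP
\;\le\;
\int_{\Vor{2}}\norm{\thetaMS-\bm{\theta}}^2\,dP
\;\le\;
\expect\bigl(\norm{\bm{\theta}-\thetaMS}^2\,\mathbf{1}_{\norm{\bm{\theta}-\thetaMS}>t/2}\,\big|\,\bm{Z}\bigr),
\]
with no need for the decomposition in $(s,\bm{w})$ or the constant $10$; the feared $t^2\,\PZvorTwo$ product never enters. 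This is precisely the mechanism behind the paper's Proposition~\ref{prop:D}, just expressed more compactly.
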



\section{The Gaussian Case}\label{sec:est_gauss}
In this section we assume that the parameter and the measurements are
jointly Gaussian distributed, so that:
\begin{equation}
\bm{\theta}\mid \bm{Z} \sim \mathcal{N}(\bm{\theta}\mid \bm{Z}{,}\bm{\mu}{(\bm{Z})},\bm{R}{(\bm{Z})})
\end{equation}
where $\bm{\mu}$, the conditional mean, is the MMSEE. The conditional
covariance matrix $\bm{R}$ is assumed to be positive definite.  Let
the eigenvalues of $\bm{R}$ be
$\lambda_1 \geq \lambda_2 \geq \dots \geq \lambda_n$, and let their
corresponding unit-{norm} eigenvectors be
$\bm{v}_{\lambda_1},\bm{v}_{\lambda_2},\dots,\bm{v}_{\lambda_n}$.  We further denote
the conditional mean of $\bm{u}$ [as defined in (\ref{eq:def_u})]
{and its components} as
\begin{equation}\label{eq: mu_u def}
  \bm{\mu_u} \dfn \expect (\bm{u}\mid \bm{Z})
  { \equiv (\mu_{u_1} , \mu_{u_2}, \dots \mu_{u_n} )}.
\end{equation}

We next derive the {optimal} solution for each of the altruistic
estimation approaches.
\subsection{Heterarchical Altruistic Estimation}
\begin{theorem}
\label{Theo: Gauss Heter}
In the Gaussian case, the optimal altruistic heterarchical estimates
are
\begin{subequations}\label{eq: Heter estimators}
\begin{align}
  \hat{\bm{\theta}}^{(1)}_\Heter &= \bm{\mu} + \sqrt{\frac{2 \lambda_{1}}{\pi}} \bm{v}_{\lambda_{1}}
  \label{eq: Heter estimator 1}
  \\
  \hat{\bm{\theta}}^{(2)}_\Heter &= \bm{\mu} - \sqrt{\frac{2 \lambda_{1}}{\pi}} \bm{v}_{\lambda_{1}}
  \label{eq: Heter estimator 2}
\end{align}
\end{subequations} {and their estimation error covariances are
  identical to that of the MMSEE}.
\end{theorem}
\begin{proof}
  We begin by stating the following proposition, for the proof of
  which the reader is referred to
  {Appendix~\ref{sec:altruistic_eq}.}
\begin{prop}\label{prop:Heter altruistic}
  In the Gaussian case, the unique solution to the heterarchical
  altruism equation (\ref{eq:Heter:altruism equation prelim}) is
\begin{equation}\label{eq:Heter:um_is_mu1}
\hat{u}^m = \mu_{u_1}.
\end{equation}
\end{prop}
Using (\ref{eq:Heter:um_is_mu1}) in (\ref{eq: D1 is
  1D in u}) and noting the symmetry of the Gaussian distribution about
its mean yields
\begin{equation}\label{eq:Heter: Probability of V1 is half}
\PZvor(\mathcal{V}_1) = \Pr(u_1<\mu_{u_1} \givenMeas) = \PZvor(\mathcal{V}_2) = \frac{1}{2}
\end{equation}
which is a manifestation of the heterarchy in our problem.  Using
(\ref{eq:Heter:eq_from_first_derivatives}) and the law of total
probability yields 
\begin{align} \label{eq:theta_m is zero}
  \frac{\hat{\bm{\theta}}^{(1)}+\hat{\bm{\theta}}^{(2)}}{2} =
  \transf\transp ( \EZvorOne \bm{u} \PZvor(\mathcal{V}_1) 
  +
  \EZvorTwo \bm{u} \PZvor(\mathcal{V}_2)) =
  \transf\transp  \expect(\bm{u}\mid \bm{Z}) = \bm{\mu},
\end{align}
identifying the mid-point between the two {estimates} as the MMSE
estimate.  It follows that
\begin{equation}\label{eq:Heter: Delta is 2 times estimator}
  \Delta \bm{\hat{\theta}} = 2(\bm{\hat{\theta}}^{(2)} - \bm{\mu})
  = -2(\bm{\hat{\theta}}^{(1)} - \bm{\mu})
\end{equation}
which facilitates parameterizing the problem in terms of
$\Delta \bm{\hat{\theta}}$, {thus reducing the problem's degrees of
  freedom by half}.  The cost function can, therefore, be recast as
\begin{equation}\label{eq:Heter: \JZ as func of Deltas}
\JZ = \expect (
\norm{\bm{\mu}-\frac{1}{2}\Delta \bm{\hat{\theta}}-\bm{\theta}}^2 \wedge
\norm{\bm{\mu}+\frac{1}{2}\Delta \bm{\hat{\theta}}-\bm{\theta}}^2
\givenMeas ).
\end{equation}
Manipulating (\ref{eq:Heter: \JZ as func of Deltas}) results in
\begin{align}\label{eq:Heter:\JZ with traceR}
  \nonumber
  \JZ & = \expect [
      \norm{\bm{\mu}}^2 + \frac{1}{4}\norm{\Delta\bm{\hat{\theta}}}^2 
      + \norm{\bm{\theta}}^2 -2\langle\bm{\mu},\bm{\theta}\rangle
      +   (
      \Delta\bm{\hat{\theta}}\transp (\bm{\theta}-\bm{\mu})
      \wedge
      \Delta\bm{\hat{\theta}}\transp (\bm{\mu}-\bm{\theta})
      )
      \givenMeas ]
      \notag   \\ 
    & = \trace\bm{R} +  \frac{1}{4}\norm{\Delta\bm{\hat{\theta}}}^2
      - \expect(\abs{\Delta\bm{\hat{\theta}}\transp(\bm{\theta}-\bm{\mu})}
      \givenMeas )
\end{align}
where we have used the fact that $\min(-a,a) = -\abs{a}$ for any
$a\in\mathbb{R}$. Substituting
\begin{align}\label{eq:ttranspt}
   \Delta\bm{\hat{\theta}}\transp(\bm{\theta}-\bm{\mu}) 
    = \Delta\bm{\hat{\theta}}\transp \transf^T \transf
    (\bm{\theta}-\bm{\mu}) = (\transf \Delta\hat\theta)^T(u-\mu_u)
    = \norm{\Delta\hat\theta}(u_1 - \mu_{u_1})
\end{align}
in \eqref{eq:Heter:\JZ with traceR} yields
\begin{equation}\label{eq:finalJheter}
\JZ  =  \trace\bm{R} +  \frac{1}{4}\norm{\Delta\bm{\hat{\theta}}}^2
      - \norm{\Delta\bm{\hat{\theta}}} \expect(\abs{ u_1-\mu_{u_1}} 
      \givenMeas ).
\end{equation}
Explicitly expressing the central absolute first moment of the
Gaussian variable $u_1$ in \eqref{eq:finalJheter}\cite{folded} yields
\begin{align}\label{eq:Heter: \JZ final}
\JZ &= \trace\bm{R} + 
\frac{1}{4}\norm{\Delta\bm{\hat{\theta}}}^2
- \norm{\Delta\bm{\hat{\theta}}}\sqrt{\frac{2R_{u_1}}{\pi}}
\end{align}
with $R_{u_1}$ being the conditional variance of $u_1\givenMeas$.
The term $R_{u_1}$ depends only on the direction of $\Delta\bm{\hat\theta}$
(and not on its norm), because the rotation matrix \transf that maps
$\bm{\theta}$ into $\bm{u}$ is a function of that direction only. This
observation, then, means that  (\ref{eq:Heter: \JZ final}) is a
parametrization of $\JZ$ in terms of the norm and argument of
$\Delta\bm{\hat\theta}$. We thus proceed with finding the optimal norm
first.  Differentiating $\JZ$ with respect to
$\norm{\Delta\bm{\hat{\theta}}}$ and setting the derivative to zero
yields:
\begin{align}\label{eq:Heter: NormDelta}
  \norm{\Delta\bm{\hat{\theta}}} &= 2\sqrt{\frac{2R_{u_1}}{\pi}}.
\end{align}
Substituting (\ref{eq:Heter: NormDelta}) into (\ref{eq:Heter: \JZ
  final}) yields
\begin{equation}\label{Heter: \JZ after subbing norm}
  \JZ\big|_{\norm{\Delta\bm{\hat{\theta}}} = 2\sqrt{\frac{2R_{u_1}}{\pi}}} = \trace\bm{R} - \frac{2R_{u_1}}{\pi}.
\end{equation}
Therefore, minimizing $\JZ$ is equivalent to solving
\begin{equation}
  \max_{\Delta\hat{\bm{\theta}}} R_{u_1} \quad \text{such that}\quad  \norm{\Delta\bm{\hat{\theta}}} = 2\sqrt{\frac{2R_{u_1}}{\pi}}.
\end{equation}
To do that we write
\begin{equation}\label{eq: Ru1}
  R_{u_1}
   =  \bm{e_{\theta_1}}\transp (\transf \bm{R} \transf\transp) \bm{e_{\theta_1}}
  = \frac{\Delta\hat{\bm{\theta}}\transp }{\norm{\Delta\bm{\hat{\theta}}}}
    \bm{R}
    \frac{\Delta\hat{\bm{\theta}}}{\norm{\Delta\bm{\hat{\theta}}}}
\end{equation}
where $\bm{e}_{\theta_1}$ is the unit vector along the first standard
basis vector, so that the maximization problem becomes
\begin{equation}\label{eq:heter:max_prob_before_RayleighRitz}
\max_{ \Delta\hat{\bm{\theta}}}  
\frac{\Delta\hat{\bm{\theta}}\transp \bm{R}\Delta\hat{\bm{\theta}}}
    {\Delta\hat{\bm{\theta}}\transp \Delta\hat{\bm{\theta}}}.
\end{equation}

According to the Rayleigh-Ritz theorem \cite{Rayleigh-Ritz}, the
maximum in (\ref{eq:heter:max_prob_before_RayleighRitz}) is
$\lambda_1$, the largest eigenvalue of $\bm{R}$, and it is reached for
$\Delta\hat{\bm{\theta}}$ that is collinear with the eigenvector
$\bm{v}_{\lambda_{1}}$ of $\bm{R}$ corresponding to
$\lambda_{1}$. Thus, using (\ref{eq:Heter: NormDelta}), we have
\begin{equation}\label{eq:Heter:Delta theta optimal}
  \Delta\hat{\bm{\theta}} = 2\sqrt{\frac{2 \lambda_{1}}{\pi}} \bm{v}_{\lambda_{1}}
\end{equation}
which, with (\ref{eq:Heter: Delta is 2 times estimator}), then yields
\eqref{eq: Heter estimators}.  Moreover, using (\ref{Heter: \JZ after
  subbing norm}), the cost {obtained by using the candidate
  heterarchical estimators is}
\begin{equation}\label{eq: \JZ Heter cost}
J_\Heter = \trace \bm{R} - \frac{2}{\pi}\lambda_{1}
\end{equation}
which is identical among all candidate solutions and independent of
$\bm{v}_{\lambda_{1}}$.  Combining this fact with Theorem \ref{theo:
  Heter optimality}, which states that the candidate solutions include
at least one global solution, we conclude that all candidate solutions
are global minimizers.

{Finally, because $\hat{\bm{\theta}}^{(1)}_\Heter$,
  $\hat{\bm{\theta}}^{(2)}_\Heter$ differ from the MMSEE by a
  deterministic constant (given the measurements), their estimation
  error covariances are identical to that of the MMSEE.}
\end{proof}
In passing, we observe that, as $J_\Heter$ depends only on $\bm{R}$,
then, for the optimal estimators, $\J = \JZ = J_\Heter$.  {Also, as
  the solution requires only the computation of the largest eigenvalue
  and its corresponding eigenvector, efficient numerical algorithms,
  such as the power method, can be used in real-time applications
  involving high dimensionality.}
\subsection{Hierarchical Altruistic Estimation}
\begin{theorem}
\label{Theo: Gauss Hier}
In the Gaussian case, letting the first altruistic hierarchical
{estimate} be
\begin{equation}\label{eq: Hier first_estimator}
\bm{\hat{\theta}}^{(1)}_\Hier = \thetaMS = \bm{\mu}
\end{equation}
the optimal second {estimate} is
\begin{equation}
  \label{eq: Hier estimators}
  \bm{\hat{\theta}}^{(2)}_\Hier = \bm{\mu} +  \wHier \sqrt{\lambda_{1}}
  \bm{v}_{\lambda_{1}}  
\end{equation}
where $\wHier$ is defined
such that $\chi = \frac{1}{2}\wHier$ is the unique solution to
\begin{equation}\label{eq:Hier:chi_eq Theorem}
  \frac{\phi(\chi)}{2[1-\Phi(\chi)]}
  - \chi = 0.  
\end{equation}
In \eqref{eq:Hier:chi_eq Theorem}, $\phi$ and $\Phi$ are the standard
Gaussian probability density and cumulative distribution functions,
respectively. The estimation error covariance of
$\bm{\hat{\theta}}^{(2)}_\Hier$ is identical to that of the MMSEE.
\end{theorem}
\begin{remark}
  The approximate value of $\wHier$ is $1.224$ (see
  Appendix~\ref{sec:altruistic_eq}).  Notice that $\wHier$, which can
  be referred to as the standardized hierarchical shift of
  $\bm{\hat{\theta}}^{(2)}_\Hier$ from the conditional distribution
  mode (as it is the shift for the standardized case where
  $\lambda_{1}=1$), is bigger than its analogous heterarchical
  standardized shift, $\sqrt{\frac{2}{\pi}}$, as per
  Theorem~\ref{Theo: Gauss Heter}.  This shift difference can be
  explained by observing that in the hierarchical approach the first
  estimate is the mode, rendering a bigger shift from it
  probabilistically beneficial, compared with the heterarchical
  approach, where both estimates are already shifted from the mode in
  opposite directions.
\end{remark}
\begin{remark} In \eqref{eq: Hier estimators}, the sign of the second
  term on the RHS is arbitrary, because the sign of the eigenvector
  $\bm{v}_{\lambda_{1}}$ is arbitrary.
\end{remark}
\begin{proof}
  Using (\ref{eq: Hier first_estimator}) in
  (\ref{eq:Lemma_Deltas_definition}) yields
  \begin{equation}\label{eq:Hier: Delta is equal to second estimator
      minus MMSE}
\bm{\hat{\theta}}^{(2)} = \bm{\mu} + \Delta \bm{\hat{\theta}}
\end{equation}
so that the cost function can be written as
\begin{equation}\label{eq:hier_cost_gauss}
  \JZ = \expect (
    \norm{\bm{\mu} - \bm{\theta}}^2 \wedge
    \norm{\bm{\mu} + \Delta\hat{\bm{\theta}}-\bm{\theta}}^2
    \givenMeas ).
\end{equation}
Manipulating \eqref{eq:hier_cost_gauss} yields 
\begin{align}\label{eq:Hier:\JZ with traceR_prelim}
  & \JZ  =  \trace\bm{R} + \frac{1}{2}\norm{\Delta\hat{\bm{\theta}}}^2 
    - \Delta\hat{\bm{\theta}}\transp\expect (\bm{\theta}-\bm{\mu}\givenMeas)
    \notag  \\
  & \quad  + \expect \{ [
    -\frac{1}{2}\norm{\Delta\hat{\bm{\theta}}}^2 + \Delta\hat{\bm{\theta}}\transp(\bm{\theta}-\bm{\mu})]
    \wedge
    [\frac{1}{2}\norm{\Delta\hat{\bm{\theta}}}^2 -\Delta\hat{\bm{\theta}}\transp(\bm{\theta}-\bm{\mu})]
    \givenMeas \}  \notag \\
  &  = \trace\bm{R} + \frac{1}{2}\norm{\Delta\hat{\bm{\theta}}}^2
    - \expect (
    \abs{\Delta\hat{\bm{\theta}}\transp(\bm{\theta}-\bm{\mu}) -\frac{1}{2}\norm{\Delta\hat{\bm{\theta}}}^2 }
    \givenMeas )
\end{align}
where we have used $\expect[(\bm{\theta}-\bm{\mu})\mid\bm{Z}] = 0$.
Using \eqref{eq:ttranspt} in \eqref{eq:Hier:\JZ with traceR_prelim}
yields
\begin{align}\label{eq:Hier:\JZ with traceR}
  \nonumber
  & \JZ  = \trace\bm{R} + \frac{1}{2}\norm{\Delta\hat{\bm{\theta}}}^2
    - \norm{\Delta\hat{\bm{\theta}}} \expect (
    \abs{(u_1-\mu_{u_1}) -\frac{1}{2}\norm{\Delta\hat{\bm{\theta}}} }
    \givenMeas )
    \nonumber \\
  & \quad = \trace\bm{R} + \frac{1}{2}\norm{\Delta\hat{\bm{\theta}}}^2
  - \norm{\Delta\hat{\bm{\theta}}}
    [ -\frac{1}{2}\norm{\Delta\hat{\bm{\theta}}}(1-2\Phi(\frac{\norm{\Delta\hat{\bm{\theta}}}}{2\sqrt{R_{u_1}}})) 
    + 2\sqrt{R_{u_1}} \phi(\frac{\norm{\Delta\hat{\bm{\theta}}}}{2\sqrt{R_{u_1}}})
    ].
\end{align}
Equation (\ref{eq:Hier:\JZ with traceR}) is obtained by introducing
$\varphi\dfn u_1-(\mu_{u_1} +
\frac{1}{2}\norm{\Delta\hat{\bm{\theta}}})$, and calculating the
conditional mean of the folded variable
$\expect (\abs{\varphi} \givenMeas )$\cite{folded}.

To proceed with the minimization of $\JZ$, we need to calculate
$\norm{\Delta\hat{\bm{\theta}}}$.  To do that, we need to first solve
the hierarchical altruism equation (\ref{eq:Hier:altruism
  equation}). It turns out that this equation has no analytical
solution, even in the Gaussian case. Using $\wHier$, which was defined
implicitly in the Theorem, yields its unique solution (see
Appendix~\ref{sec:altruistic_eq}) as
\begin{equation}
                  \hat{u}^m = \mu_{u_1} + \frac{1}{2}\wHier
                  \sqrt{R_{u_1}}\label{eq:Hier: um result}.
\end{equation}
To calculate $\norm{\Delta\bm{\hat\theta}}$ we use the rotation
transformation \transf:
\begin{equation}
  \norm{\Delta\bm{\hat\theta}} = \norm{\transf \Delta\bm{\hat\theta}} = \norm{\Delta\bm{\hat u}}.
\end{equation}
In the transformed parameter space, both transformed {estimates}
reside along the solution axis $\bm{e_{u_1}}$, such that $\hat{u}^m$ is the
midpoint between them along that axis.  Hence
\begin{equation}
  \norm{\Delta\bm{\hat u}} = \Delta\hat u_1 = 2(\hat u^m - \mu_{u_1})
\end{equation}
where $\Delta\hat u_1$ is the component of $\Delta \bm{\hat u}$ along
$\bm{e_{u_1}}$.  Using \eqref{eq:Hier: um result} then yields
\begin{equation}\label{eq:deltatheta}
  \norm{\Delta\bm{\hat\theta}} = \wHier \sqrt{R_{u_1}}.
\end{equation}
Using \eqref{eq:deltatheta} in (\ref{eq:Hier:\JZ with traceR}) yields
\begin{align}
  \nonumber
  \JZ\big|_{\norm{\Delta\bm{\hat{\theta}}} =
  \wHier \sqrt{R_{u_1}}}
  & = \trace\bm{R} -
    [ 2\phi(\frac{\wHier}{2})
    - \wHier (1 - \Phi(\frac{\wHier}{2}))
    ] \wHier R_{u_1}
  \\
  &= \trace\bm{R} - \phi(\frac{\wHier}{2}) \wHier R_{u_1}
  	\label{eq:Hier: \JZ final}
\end{align}
where the last equality results from using \eqref{eq:Hier:chi_eq} with
\eqref{eq:138a} (see Appendix~\ref{sec:altruistic_eq}). Moreover, as
analytically proved in Appendix~\ref{sec:altruistic_eq},
$\frac{1}{2}\wHier \in (0,\sqrt{3})$, hence
$\phi(\frac{\wHier}{2}) \wHier > 0$. Thus, \eqref{eq:Hier: \JZ final}
leads to a maximization problem identical to
\eqref{eq:heter:max_prob_before_RayleighRitz}, obtained in the
heterarchical problem. Adopting the solution of that problem (together
with the definition of $\wHier$) yields \eqref{eq: Hier estimators}.
Moreover, {the cost obtained from using the candidate hierarchical
  estimates is}
\begin{equation}\label{eq: \JZ Hier cost}
  J_\Hier = \trace\bm{R} - \phi(\frac{\wHier}{2}) \wHier\lambda_{1}
\end{equation}
and that cost is identical among all candidate solutions, and
independent of $\bm{v}_{\lambda_{1}}$.  Combining this observation
with Theorem \ref{theo: Hier optimality}, which states that the
candidate solutions include at least one global solution, renders all
candidate solutions global minimizers.

Finally, because $\hat{\bm{\theta}}^{(2)}_\Hier$ differs from the
MMSEE by a deterministic constant (given the measurements), its
estimation error covariance is identical to that of the MMSEE.
\end{proof}
Similarly to the heterarchical problem, here too only the largest
eigenvalue and its corresponding eigenvector need to be calculated.
\subsection{Cost Reduction}\label{sec:Results}
To assess the benefit of the altruistic cooperative methodology, we
compare its achievable MSE cost~(\ref{eq:CostFunctionDefinition}) to
the MMSE baseline cost achieved by using two identical MMSE
{estimates},
\begin{equation}
\JMS \dfn \expect 
\norm{\thetaMS-\bm{\theta}}^2.
\end{equation}
In the Gaussian case $\JMS = \trace {\bm{R}}$.

For both approaches we define the relative cost reduction as
\begin{equation}\label{eq: J_Heter reduction def}
  \Upsilon_\text{MTHD}
  \dfn
  1 - \frac{J_\text{MTHD}}{\JMS}, \quad \text{MTHD} = \Heter \text{ or }
  \Hier.
\end{equation}

In the heterarchical approach, (\ref{eq: \JZ Heter cost}) yields
\begin{align}\label{eq: Upsilon_Heter}
\Upsilon_\Heter
&=
\frac{\frac{2}{\pi}\lambda_1}
{\sum_{i=1}^n \lambda_i}.
\end{align}
Since $\lambda_1 \geq \lambda_2 \geq \dots \geq \lambda_n$, then, for
a given $\lambda_1$,
\begin{equation}\label{eq: Upsiln_Heter upper bound}
\sup_{\lambda_2,\dots,\lambda_n} \Upsilon_\Heter
=
\lim\limits_{\frac{\lambda_2}{{\lambda_1}} \to 0} \Upsilon_\Heter
=
\frac{2}{\pi}
\end{equation}
and
\begin{equation}\label{eq: Upsiln_Heter lower bound}
\min_{\lambda_2,\dots,\lambda_n} \Upsilon_\Heter
=
\Upsilon_\Heter \big|_{\lambda_2=\dots=\lambda_n = \lambda_1} 
=
\frac{2}{n\pi}
\end{equation}
which gives
\begin{equation}
\frac{2}{n\pi} \leq \Upsilon_\Heter < \frac{2}{\pi}.
\end{equation}

In the hierarchical case,
\begin{align}\label{eq: Upsilon_Hier}
\Upsilon_\Hier
= 
\frac{\wHier \phi(\frac{\wHier}{2}) \lambda_1}
{
\sum_{i=1}^n \lambda_i
}
\end{align}
whence
\begin{equation}
  \frac{\wHier \phi(\frac{\wHier}{2})}{n} \leq \Upsilon_\Hier < \wHier \phi(\frac{\wHier}{2}).
\end{equation}

Notice that in both approaches the best achievable relative reduction
corresponds to $\frac{\lambda_1}{\lambda_2}\to\infty$, whereas the
worst achievable reduction corresponds to
$\lambda_1=\lambda_2=\ldots=\lambda_n$. This is so because in both
approaches the two {estimates} are dispersed along the eigenvector
that corresponds to $\lambda_{1}$. Thus, the benefit gained from
dispersing the {estimates} is biggest when the variance in that
direction is largest compared with the other variances. When the
variances in all directions are equal, the benefit assumes its
smallest possible value. It is also noted that the benefit shrinks
when the dimension of the system increases, because there are only two
{estimates}, distributed along one direction. Nevertheless, even in
high dimensional cases, if one direction dominates the others in terms
of its variance, still the reduction can be significant, which means
that the altruistic approaches become appealing in cases involving
ill-conditioned covariance matrices (characterized by large condition
numbers).

To demonstrate the effect of the problem's dimensionality on the cost
function reduction, the upper and lower cost reduction bounds for each
approach are depicted in Fig.~\ref{fig:J_Reduction_Bounds}. In a
scalar problem ($n=1$) the lower and upper bounds coincide, yielding a
unique value for the reduction. At higher dimensions the best
achievable gains are identical to those obtained for the scalar
problem, whereas the worst achievable reductions diminish with the
increasing dimension.
\begin{figure}[tbh]
\centering
\includegraphics[width=\linewidth]{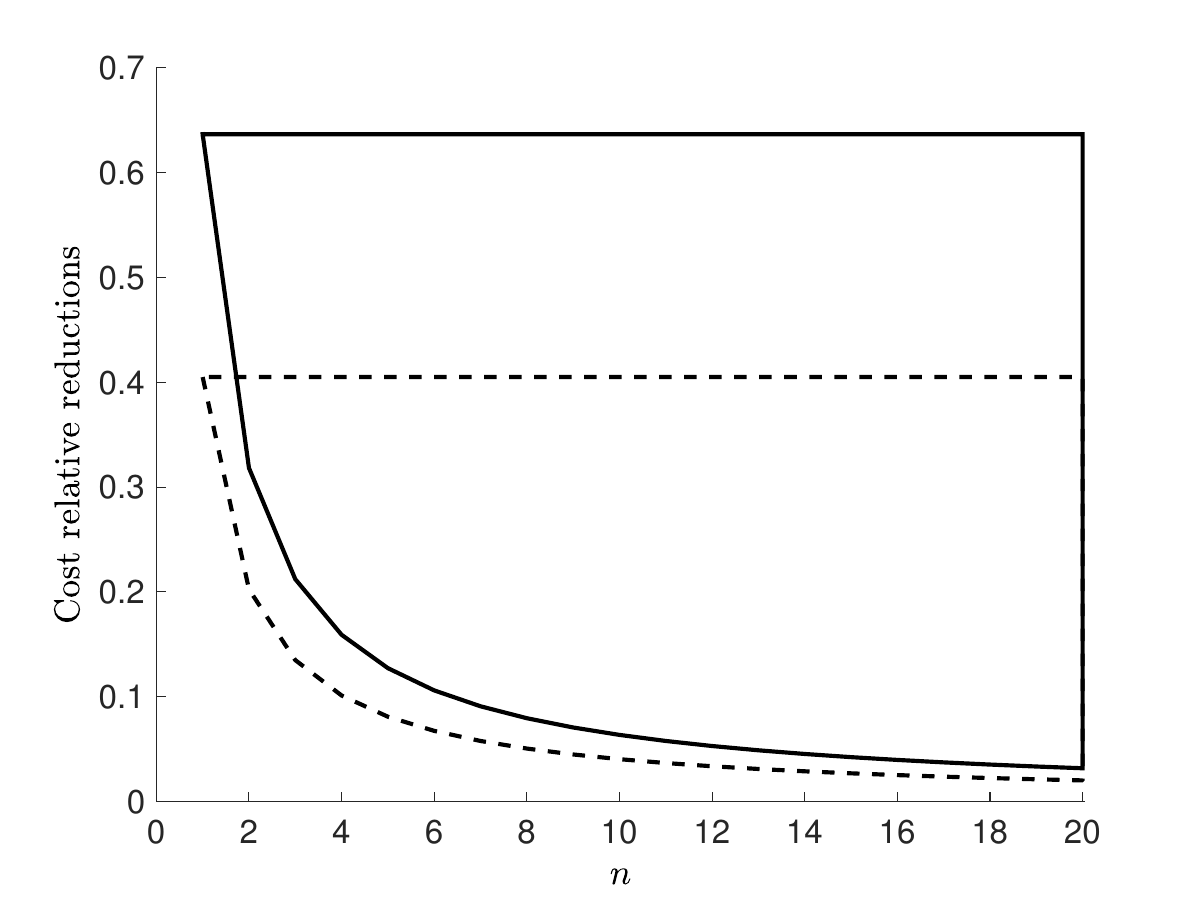}
\caption{Relative cost reduction bounds vs problem dimension. Solid
  lines: heterarchical ($\Upsilon_\Heter$).  Dashed lines:
  hierarchical ($\Upsilon_\Hier$).  }
\label{fig:J_Reduction_Bounds}
\end{figure}

\subsection{Numerical Illustration}\label{sec:illus}
We now numerically illustrate the behavior of the cost function $\JZ$,
in order to demonstrate the validity of the solutions in the Gaussian
case, and highlight some properties of altruistic estimation problems
and their solutions.  We focus on $\JZ$, rather than on $J$, because,
as we have previously shown, the choice of the optimal estimators is
independent of the measurements.

Let $\theta \givenMeas \sim \mathcal{N}(0,100)$.  The
heterarchical and hierarchical estimates are
\begin{equation}\label{eq:1dheter}
  \hat{\theta}^{(1)}_{\Heter} = 
  10\sqrt{\frac{2}{\pi}} \approx 7.979, \qquad \hat{\theta}^{(2)}_{\Heter} =
-\hat{\theta}^{(1)}_{\Heter} 
\end{equation}
and
\begin{equation}\label{eq:1dhier}
  \hat{\theta}^{(1)}_{\Hier} = \mu = 0, \qquad \hat{\theta}^{(2)}_{\Hier} = \pm 10 w_\Hier \approx \pm 12.240.
\end{equation}
The costs for the MMSE (two egoistic estimates), heterarchical, and
hierarchical approaches are: $\JMS = 100$, $J_{\Heter} \approx 36.338$,
and $J_{\Hier} \approx 59.5$, respectively.

Figure~\ref{fig:Example_Gaussian_1D} shows equilevel contour lines of
the cost function, computed at each node
$(\hat{\theta}^{(1)}, \hat{\theta}^{(2)})$ of a
$500\times 500$ grid of the two {estimates}. The cost is approximated
as the mean of $10^5$ samples drawn from the given parameter
{conditional} distribution. The contours are distributed
logarithmically, so that they are denser around lower values of
{$\JZ$}. The optimal heterarchical and hierarchical estimates,
\eqref{eq:1dheter} and \eqref{eq:1dhier}, respectively, are
superimposed on the contour plot as squares {and circles},
respectively.  The figure exhibits the tendency of {$\JZ$} to
infinity when both {estimates} tend to infinity in absolute values
(as per Lemma {1} in~{\ref{sec:app_proof_opt_heter}}). On the other
hand, when one of the {estimates} tends to infinity in absolute
value and the other {estimate} remains finite (which renders the
infinite {estimate} irrelevant in the computation of the cost),
the lowest value of {$\JZ$} is the MMSE {estimate}, which is
achieved when the finite {estimate} is the MMSEE.
\begin{figure}[tbh]
  \centering
  \includegraphics[width=\linewidth]{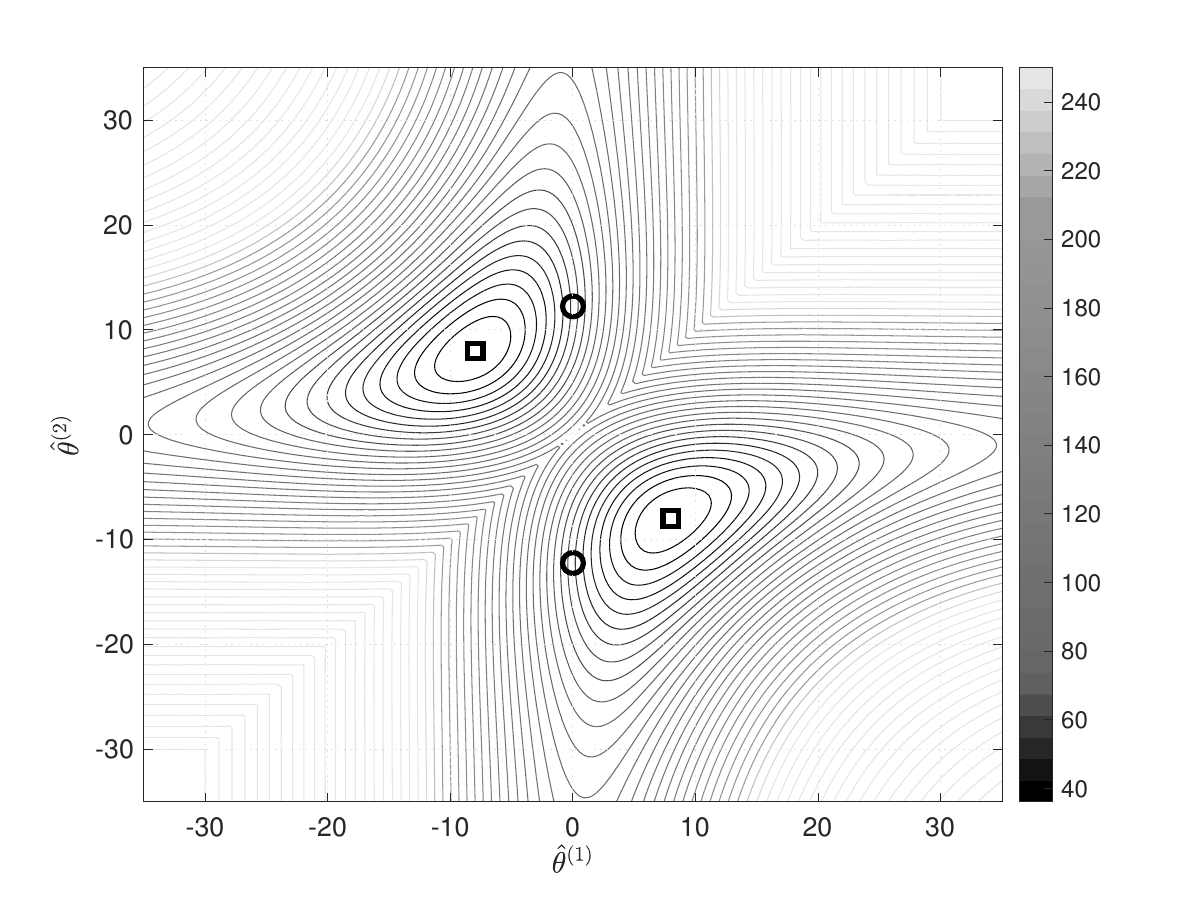}
  \caption{Equilevel cost contours of {$\JZ$} in the 1-D Gaussian
    example of subsection \ref{sec:illus}. Square markers:
    heterarchical solutions. Circle markers: hierarchical solutions.}
\label{fig:Example_Gaussian_1D}
\end{figure}

Figure~\ref{fig:Example_Gaussian_1D} exhibits two reflection
symmetries of the function {$\JZ$}, 1) about the mirror line
$\hat{\theta}^{(1)} = \hat{\theta}^{(2)}$, and 2) about the
normal to the line $\hat{\theta}^{(1)}=\hat{\theta}^{(2)}$
through the origin. The first symmetry expresses the symmetric nature
of {$\JZ$} with respect to its arguments (the two
{estimates}). The second expresses the symmetric nature of the
Gaussian distribution.  The figure also clearly demonstrates the
non-differentiability of {$\JZ$} on the reflection axis
$\hat{\theta}^{(1)}=\hat{\theta}^{(2)}$.  The figure
exhibits two (reflective) global minima of the function {$\JZ$},
precisely at the analytically calculated heterarchical {estimates}
~\eqref{eq:1dheter}, with cost agreeing with the analytically computed
cost. The two (reflective) optimal hierarchical {estimates}
{of the second hierarchical estimator} are positioned at the
 {minima} of {$\JZ$} along the
constraint line $\hat{\theta}^{(1)}=0$, which coincide with the
analytically calculated hierarchical
{estimates}~\eqref{eq:1dhier}, with cost agreeing with the
analytically computed cost. As could be expected, at both
{minima} the {$\JZ$} contour lines are
tangent to the constraint.


\section{Conclusions}\label{sec:conclusions}
We have proposed an estimation methodology for optimal cooperation
between two in\-formation-sharing agents, that is based on the notion
of altruism. The methodology is suited for scenarios {that can benefit
  from the existence of two opportunities to estimate, such as}
scenarios involving two cooperating {agents} that have one global
mission that is accomplished even if only one of the {agents} provides
a satisfactory estimate {(using its own, local estimator)}. In the
proposed approach, the two {agents} do not yield an identical optimal
estimate, but, rather, at least one of them sacrifices its own
estimation performance by providing a sub-optimal estimate. The
benefit of the proposed scheme is an improvement in the overall
estimation performance, measured by a global mean squared error
criterion.

Two approaches of altruistic cooperation are proposed. In the
heterarchical approach, both estimators are altruistic, which yields
two sub-optimal estimates that are different than the (egoistic) MMSE
estimate. In the hierarchical approach the first {agent} maximizes its
performance egoistically without considering the other estimator, thus
computing the MMSE estimate; the second {agent} maximizes the global
performance measure while taking into account the presence of the
first (MMSE-optimal) agent.  Implicit and coupled equations are
derived for the design of the estimators in both approaches.  In the
Gaussian case, explicit optimal solutions are provided, that require,
in both approaches, only the calculation of the largest eigenvalue of
the conditional covariance matrix of the parameter and its
corresponding eigenvector.  These results can also be viewed as
analytical solutions to two well-known Voronoi serviceability problems
in the two-facility case.

In the Gaussian case, it is shown that the improvement in the overall
performance (relative to naive MMSE estimation) depends on the
dimension of the problem and on the spread of the spectrum of the
conditional covariance matrix.  In general, the larger the dimension
of the problem, the smaller the improvement that can be expected using
the proposed cooperative estimation approach. On the other hand, the
proposed altruistic approaches are especially appealing in (even
high-dimensional) ill-conditioned estimation problems.


\begin{appendices}
\section{Proof of Theorem~\ref{theo: Heter optimality}}\label{sec:app_proof_opt_heter}
\begin{proof}
  Following the reasoning of the estimator derivation in
  Section~\ref{sec:est_general}, the proof follows from minimizing the
  measurement-conditioned cost, $\JZ$, defined in \eqref{eq:
    definition of \JZ}. Let
  $\hat{\bm{\theta}} \in (\augmentedSpaceRealization)$ and
  $\JZ (\hat{\bm{\theta}})$ denote the augmented estimate
  vector
  $[ (\hat{\bm{\theta}}^{(1)})^T, (\hat{\bm{\theta}}^{(2)})^T ]^T $
  and the value of the cost \eqref{eq: definition of \JZ}
  computed with the pair of estimates comprising
  $\hat{\bm{\theta}}$, respectively.
  
  For the reader's convenience, we first provide an overview of the
  proof's main stages:
  \begin{enumerate}
  \item We tessellate the augmented estimate vector space
    $(\augmentedSpaceRealization)$ into two parts: an internal part
    (where $\hat{\bm{\theta}}$ is bounded), and its complementary
    external part.
  \item We prove Lemma \ref{lem: Heter: \J>MMSE-eps}, showing that
    choosing the internal part to be large enough guarantees that for
    any point $\hat{\bm{\theta}}$ in the external part, the cost
    $\JZ (\hat{\bm{\theta}})$ is arbitrarily close to the MMSE.
  \item We prove Lemma \ref{lem: Heter: min of \J in the ball},
    showing that choosing the internal part to be sufficiently large
    guarantees that there is at least one minimum point (satisfying
    \eqref{eq:Heter:eq_from_first_derivatives_prelim}) in the internal
    part, for which the cost $\JZ (\hat{\bm{\theta}})$ is
    strictly smaller than the MMSE.
  \item We build a final internal part, which is sufficiently large to
    guarantee that any point in the external part yields a cost
    strictly higher than that of any minimum of the internal part.
\end{enumerate}

Moving on to the proof itself, define
\begin{equation}\label{eq: defi: S}
\Sset^a
\dfn
\{
\hat{\bm{\theta}} \mid 
 \norm{\hat{\bm{\theta}}^{(1)}} \vee     \norm{\hat{\bm{\theta}}^{(2)}}
\leq a
\}, \quad a\in\PositiveReals
\end{equation}
where $\alpha \vee \beta \dfn \max(\alpha, \beta)$ for some
$\alpha, \beta \in\mathbb{R}$. We begin by proving the following two
lemmas.
\begin{lemma}\label{lem: Heter: \J>MMSE-eps}
  For any $\epsilon\in\PositiveReals$ there exists
  $a(\epsilon)\in\PositiveReals$ such that if
  $\hat{\bm{\theta}} \not\in
  \Sset^{a(\epsilon)}$
  then $\JZ (\hat{\bm{\theta}})  > \JMS-\epsilon$.
\end{lemma}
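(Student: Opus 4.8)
The plan is to work conditionally on $\bm{Z}$: by the reduction noted after \eqref{eq: definition of \J} (the global minimizers of $\J$ coincide pointwise with those of $\JZ$), it suffices to prove the statement for $\JZ$, treating $\hat{\bm{\theta}}^{(1)},\hat{\bm{\theta}}^{(2)}$ as points in $\mathbb{R}^n$ and $\bm{\theta}$ as drawn from the conditional law, which has finite second moment by the standing assumption on the first two moments. Here $\JMS$ is the conditional MMSE, attained by the conditional mean, so that for \emph{any} single estimator $\expect(\norm{\hat{\bm{\theta}}^{(1)}-\bm{\theta}}^2\given\bm{Z})\ge\JMS$. Since both $\JZ$ and $\Sset^a$ are symmetric under exchanging the two estimators (the $\wedge$ is symmetric), I may assume without loss of generality that $\norm{\hat{\bm{\theta}}^{(2)}}=\norm{\hat{\bm{\theta}}^{(1)}}\vee\norm{\hat{\bm{\theta}}^{(2)}}>a$.

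I would then split according to the size of the \emph{smaller} estimator. If both are large, say $\norm{\hat{\bm{\theta}}^{(1)}}>a/2$, then on the event $\{\norm{\bm{\theta}}\le a/4\}$ the triangle inequality gives $\norm{\hat{\bm{\theta}}^{(i)}-\bm{\theta}}\ge\norm{\hat{\bm{\theta}}^{(i)}}-\norm{\bm{\theta}}>a/4$ for both $i$, so the integrand $\norm{\hat{\bm{\theta}}^{(1)}-\bm{\theta}}^2\wedge\norm{\hat{\bm{\theta}}^{(2)}-\bm{\theta}}^2$ exceeds $a^2/16$ there, whence $\JZ\ge(a^2/16)\Pr(\norm{\bm{\theta}}\le a/4\given\bm{Z})\to\infty$ and certainly exceeds $\JMS$ once $a$ is large. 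This blow-up regime is the one genuinely separate case one must isolate, since it is not captured by the subtraction argument below.

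In the complementary regime $\norm{\hat{\bm{\theta}}^{(1)}}\le a/2<a<\norm{\hat{\bm{\theta}}^{(2)}}$ I would use $x\wedge y=x-(x-y)^+$, with $(\cdot)^+\dfn\max(\cdot,0)$, applied to $x=\norm{\hat{\bm{\theta}}^{(1)}-\bm{\theta}}^2$ and $y=\norm{\hat{\bm{\theta}}^{(2)}-\bm{\theta}}^2$, to write
\begin{equation*}
\JZ=\expect(\norm{\hat{\bm{\theta}}^{(1)}-\bm{\theta}}^2\given\bm{Z})-\expect\big((x-y)^+\given\bm{Z}\big)\ge\JMS-\expect\big((x-y)^+\given\bm{Z}\big).
\end{equation*}
The correction is supported on $\Vor{2}$, where $\bm{\theta}$ is closer to $\hat{\bm{\theta}}^{(2)}$; expanding that defining inequality and applying Cauchy--Schwarz shows $\bm{\theta}\in\Vor{2}$ forces $\norm{\bm{\theta}}>\tfrac12(\norm{\hat{\bm{\theta}}^{(2)}}-\norm{\hat{\bm{\theta}}^{(1)}})>a/4$. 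On $\Vor{2}$ one then has $(x-y)^+\le x\le(\norm{\hat{\bm{\theta}}^{(1)}}+\norm{\bm{\theta}})^2\le 9\norm{\bm{\theta}}^2$, the last step using $\norm{\hat{\bm{\theta}}^{(1)}}\le a/2<2\norm{\bm{\theta}}$. Hence the correction is at most $9\,\expect(\norm{\bm{\theta}}^2\mathbf{1}_{\{\norm{\bm{\theta}}>a/4\}}\given\bm{Z})$, which tends to $0$ as $a\to\infty$ since $\expect(\norm{\bm{\theta}}^2\given\bm{Z})<\infty$.

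Finally I would take $a(\epsilon)$ to be the larger of the two thresholds produced above, so that both regimes give $\JZ>\JMS-\epsilon$. The main obstacle is conceptual rather than computational: the cost is \emph{not} coercive, because driving one estimator to infinity while fixing the other at the MMSEE leaves the cost near $\JMS$. Thus the tail-control argument must be arranged to produce a lower bound that \emph{approaches} $\JMS$ from below, and the finite-second-moment hypothesis is exactly what forces the escaping Voronoi region $\Vor{2}$ to contribute negligibly; keeping the ``both estimators far'' case separate (where instead the cost diverges) is what prevents the subtraction step from being applied where it would be vacuous.
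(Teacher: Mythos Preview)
Your argument is correct and follows essentially the same two--case strategy as the paper: split according to whether both estimators are far from the origin or only one is, handle the first case by the triangle inequality on a central ball, and in the second case write $\JZ$ as the single--estimator MSE minus a nonnegative correction supported on $\Vor{2}$, then show that correction vanishes as the norm gap grows. The paper packages the two cases as Propositions~\ref{prop:a_not} and~\ref{prop:D} with separate thresholds $a_0(\epsilon)$ and $L(\epsilon)$ that are combined afterwards, whereas you tie everything to the single scale $a$ via the cutoffs $a/2$ and $a/4$; and where the paper argues abstractly that $f_1\dfn\EvorTwo\norm{\hat{\bm{\theta}}^{(1)}-\bm{\theta}}^2\,\Pvor(\mathcal{V}_2)\to 0$, you give the explicit dominated--convergence bound $(x-y)^+\le 9\norm{\bm{\theta}}^2\mathbf{1}_{\{\norm{\bm{\theta}}>a/4\}}$. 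These are bookkeeping differences rather than a different idea; your version has the mild advantage that the tail bound on the correction is uniform in $\hat{\bm{\theta}}^{(1)}$ by construction, which the paper's Proposition~\ref{prop:D} only secures once it is invoked under the additional constraint $\norm{\hat{\bm{\theta}}^{(1)}}\le a_0(\epsilon)$.
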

\begin{proof}
  We begin with a brief overview of the proof of this lemma:
\begin{enumerate}
\item Choosing some (temporary) edge $a_0(\epsilon)$ to the internal
  part $\Sset^{a(\epsilon)}$, we show in Proposition~\ref{prop:a_not}
  that when the norms of both estimates are beyond $a_0(\epsilon)$,
  the lemma is satisfied.
\item Next, we show in Proposition~\ref{prop:D} that if the distance
  between the norms of both estimates is large enough, the lemma is
  satisfied.
\item Finally, we choose a (bigger) edge $a(\epsilon)$ for which, if
  $\hat{\bm{\theta}} \not\in \Sset^{a(\epsilon)}$, then either of the
  conditions of Proposition~\ref{prop:a_not} or
  Proposition~\ref{prop:D} must be satisfied.
\end{enumerate}
	
Using the law of total probability and the triangle inequality in
\eqref{eq: definition of \JZ} yields, for any $r\in\PositiveReals$,
\begin{align}\label{eq: \J for \J>MMSE second case}
\JZ (\hat{\bm{\theta}})   &= \expect ( 
\norm{\hat{\bm{\theta}}^{(1)}-\bm{\theta}}^2 \wedge
\norm{\hat{\bm{\theta}}^{(2)}-\bm{\theta}}^2
\givenMeas, {\norm{\bm{\theta}} \leq r})
\Pr(\norm{\bm{\theta}} \leq r \givenMeas)
\notag \\
&  +
\expect ( 
\norm{\hat{\bm{\theta}}^{(1)}-\bm{\theta}}^2 \wedge
\norm{\hat{\bm{\theta}}^{(2)}-\bm{\theta}}^2
\givenMeas, {\norm{\bm{\theta}} > r} )
\Pr(\norm{\bm{\theta}} > r \givenMeas)
\notag \\
& \geq
\expect ( 
\norm{\hat{\bm{\theta}}^{(1)}-\bm{\theta}}^2 \wedge
\norm{\hat{\bm{\theta}}^{(2)}-\bm{\theta}}^2
\givenMeas, {\norm{\bm{\theta}} \leq r} )
\Pr(\norm{\bm{\theta}} \leq r \givenMeas) \notag \\
&\geq \expect [ 
(\norm{\hat{\bm{\theta}}^{(1)}}-\norm{\bm{\theta}})^2 \wedge
(\norm{\hat{\bm{\theta}}^{(2)}}-\norm{\bm{\theta}})^2
\givenMeas,  {\norm{\bm{\theta}} \leq r} ]
\notag \\ 
& \quad \times
\Pr(\norm{\bm{\theta}} \leq r \givenMeas).
\end{align}
We now define
\begin{equation}\label{eq:r}
  r \dfn \argument\limits_{\rho\in\PositiveReals} \{\Pr(\norm{\bm{\theta}} \leq \rho \givenMeas) = \frac{1}{2}\}.
\end{equation}
Also, observing that $\epsilon < \JMS$ (as otherwise the lemma holds
trivially since $\JZ>0$), and using the assumption that the
second moment of the joint distribution of $\bm{\theta}$ and $\bm{Z}$
is finite, we set
\begin{equation}\label{eq:a in Lemma 1}
a_0(\epsilon) \dfn 1 + r + \sqrt{2(\JMS-\epsilon)}.
\end{equation}
We first state and prove the following propositions.
\begin{prop} \label{prop:a_not}
  If both $\norm{\hat{\bm{\theta}}^{(1)}} > a_0(\epsilon)$ and
  $\norm{\hat{\bm{\theta}}^{(2)}} > a_0(\epsilon)$, then
  $\JZ (\hat{\bm{\theta}})>\JMS-\epsilon$.
\end{prop}
\begin{proof}
  Employing definitions \eqref{eq:r} and \eqref{eq:a in Lemma 1} in
  (\ref{eq: \J for \J>MMSE second case}) yields
\begin{align}\label{eq: \J for \J>MMSE second case 2}
 \JZ (\hat{\bm{\theta}}) & \geq [ 
  (\norm{\hat{\bm{\theta}}^{(1)}}-r)^2 \wedge
  (\norm{\hat{\bm{\theta}}^{(2)}}-r)^2
  ]
  \Pr(\norm{\bm{\theta}} \leq r \givenMeas) \notag \\
 & >  
[a_0(\epsilon)-r]^2
\Pr(\norm{\bm{\theta}} \leq r \givenMeas)
\notag \\
&=  
\frac{1}{2}[1+\sqrt{2(\JMS-\epsilon)}]^2
>
\JMS-\epsilon.
\end{align}
\end{proof}
Prior to presenting the next proposition we now assume that
$\norm{\hat{\bm{\theta}}^{(2)}} \ge \norm{\hat{\bm{\theta}}^{(1)}}$ and define
$
\mathcal{D} \dfn
\norm{\hat{\bm{\theta}}^{(2)}}
-\norm{\hat{\bm{\theta}}^{(1)}}
$. 
\begin{prop}\label{prop:D}
  For any number $\epsilon\in\PositiveReals$ there exists a number
  $L(\epsilon)\in\PositiveReals$ such that if
  $\mathcal{D} > L(\epsilon)$ then $\JZ > \JMS - \epsilon$.
\end{prop}
\begin{proof}
  Define
  $\varphi(\hat{\bm{\theta}}^{(1)}) \dfn
  \expect[\norm{\hat{\bm{\theta}}^{(1)}-\bm{\theta}}^2 \givenMeas]$.
  Then, since $\VorEdge$ is a set of measure zero, the law of total
  probability yields
\begin{equation}\label{eq: MSE of estimator 1}
  \varphi(\hat{\bm{\theta}}^{(1)}) = \EZvorOne
  \norm{\hat{\bm{\theta}}^{(1)}-\bm{\theta}}^2 \PZvor(\mathcal{V}_1) +
  \EZvorTwo \norm{\hat{\bm{\theta}}^{(1)}-\bm{\theta}}^2
 \PZvor(\mathcal{V}_2).
\end{equation}
Equation (\ref{eq: MSE of estimator 1}) yields
\begin{equation}\label{eq: total prob. on first estim.} 
\EZvorOne  
\norm{\hat{\bm{\theta}}^{(1)}-\bm{\theta}}^2 
  \PZvor(\mathcal{V}_1)
=
\varphi(\hat{\bm{\theta}}^{(1)})
 -
\EZvorTwo  
\norm{\hat{\bm{\theta}}^{(1)}-\bm{\theta}}^2 
 \PZvor(\mathcal{V}_2).
\end{equation}
Expressing $\JZ$ using the law of total probability and using (\ref{eq:
  total prob. on first estim.}) yields
\begin{equation}\label{eq: \J with f1 f2}
  \JZ  =  \varphi(\hat{\bm{\theta}}^{(1)})  - f_1  + f_2
\end{equation}
where we have defined
\begin{gather}
\label{eq: defi: f1}
f_1
\dfn
\EZvorTwo  
\norm{\hat{\bm{\theta}}^{(1)}-\bm{\theta}}^2 
\PZvor(\mathcal{V}_2)
\\
\label{eq: defi: f2}
f_2 \dfn \EZvorTwo  
\norm{\hat{\bm{\theta}}^{(2)}-\bm{\theta}}^2 
\PZvor(\mathcal{V}_2).
\end{gather}
 Clearly, $f_1\geq0$ and $f_2\geq0$.  Using (\ref{eq:defi:D2}) and the monotonicity of the probability measure, we have
\begin{align}\label{eq: app2 eq1}
\nonumber
& \PZvor(\mathcal{V}_2)
= \Pr ({\norm{\hat{\bm{\theta}}^{(1)}-\bm{\theta}}^2
-
\norm{\hat{\bm{\theta}}^{(2)}-\bm{\theta}}}^2
> 0 \givenMeas)
\\ \nonumber
& \quad =
\Pr\{\norm{\hat{\bm{\theta}}^{(1)}}^2
- \norm{\hat{\bm{\theta}}^{(2)}}^2
+ 2 \bm{\theta}\transp(\hat{\bm{\theta}}^{(2)}-\hat{\bm{\theta}}^{(1)})
> 0  \givenMeas\}
\\ \nonumber
& \quad \leq
\Pr \{\norm{\hat{\bm{\theta}}^{(1)}}^2
- \norm{\hat{\bm{\theta}}^{(2)}}^2
+ 2 \abs{\bm{\theta}\transp(\hat{\bm{\theta}}^{(2)}-\hat{\bm{\theta}}^{(1)})}
> 0  \givenMeas \}
\\
& \quad \leq
\Pr \{\norm{\hat{\bm{\theta}}^{(1)}}^2
- \norm{\hat{\bm{\theta}}^{(2)}}^2
+ 2 \norm{\bm{\theta}}
\norm{\hat{\bm{\theta}}^{(2)}-\hat{\bm{\theta}}^{(1)}}
> 0 \givenMeas \}
\end{align}
where the last inequality follows from the Cauchy-Schwarz inequality.
Using the triangle inequality yields
\begin{align}
\nonumber
& \PZvor(\mathcal{V}_2)
\leq
\Pr \{\norm{\hat{\bm{\theta}}^{(1)}}^2
- \norm{\hat{\bm{\theta}}^{(2)}}^2
\\ 
\nonumber
& \qquad\qquad + 2 \norm{\bm{\theta}}
(\norm{\hat{\bm{\theta}}^{(1)}}
+ \norm{\hat{\bm{\theta}}^{(2)}})
> 0  \givenMeas \}
\\ 
& \qquad =
\Pr \{
\norm{\hat{\bm{\theta}}^{(1)}}
- \norm{\hat{\bm{\theta}}^{(2)}}
+ 2 \norm{\bm{\theta}}
> 0 \givenMeas \}
=
 \Pr \{\norm{\bm{\theta}}
 > \frac{1}{2}  \mathcal{D}  \givenMeas
  \}
\end{align}
showing that when $\mathcal{D} \to \infty$, $\mathcal{V}_2$ becomes a
set of measure zero. Thus $\PZvor(\mathcal{V}_1) \to 1$ and, hence, the
LHS of (\ref{eq: total prob. on first estim.}) satisfies
\begin{equation}\label{eq: E1 becomes MSE}
\lim\limits_{\mathcal{D} \to \infty}
\EZvorOne ( 
\norm{\hat{\bm{\theta}}^{(1)}-\bm{\theta}}^2 
) \PZvor(\mathcal{V}_1)
=
 \varphi(\hat{\bm{\theta}}^{(1)})
\end{equation}
yielding
$ \lim_{\mathcal{D} \to \infty}
f_1
= 0$.
By the definition (\ref{eq:defi:D2}) of the Voronoi region $\mathcal{V}_2$,
$ f_2 \leq f_1 $,
implying
$ \lim_{\mathcal{D} \to \infty}
f_2
= 0$.
Defining 
$ f \dfn f_1 - f_2 $,
 (\ref{eq: \J with f1 f2}) becomes
\begin{equation}
\JZ = \varphi(\hat{\bm{\theta}}^{(1)}) - f 
\end{equation}
where $f \geq 0$ and 
$\lim_{\mathcal{D} \to \infty}
f
= 0$.
Since, due to the fundamental theorem of MMSE estimation,
$\JMS \leq \varphi(\hat{\bm{\theta}}^{(1)})$, we have
\begin{equation}
\JZ \geq \JMS - f 
\end{equation}
which yields the  proposition.
\end{proof}
Now set 
\begin{equation}\label{eq: a choice first lemma}
a(\epsilon) \dfn 1 + a_0(\epsilon) + L(\epsilon).
\end{equation}
If
$\hat{\bm{\theta}} \not\in \Sset^{a(\epsilon)}$
then either
\begin{gather}
\label{eq: Heter Lemma: second case}
  \norm{\hat{\bm{\theta}}^{(2)}} 
  > a(\epsilon)
  \text{ and }
  \norm{\hat{\bm{\theta}}^{(1)}} 
  \leq a(\epsilon) \\
\intertext{or}
\label{eq: Heter Lemma: first case}
\norm{\hat{\bm{\theta}}^{(1)}} > a(\epsilon)
\text{ and }
\norm{\hat{\bm{\theta}}^{(2)}} > a(\epsilon).
\end{gather}
Assume, first, that \eqref{eq: Heter Lemma: second case} holds.  If
$\norm{\hat{\bm{\theta}}^{(1)}} > a_0(\epsilon) $ then
$\norm{\hat{\bm{\theta}}^{(2)}} > a_0(\epsilon) $ and the lemma is
proved based on Proposition~\ref{prop:a_not}.
Conversely, if $\norm{\hat{\bm{\theta}}^{(1)}} \leq a_0(\epsilon) $,
then, since $\norm{\hat{\bm{\theta}}^{(2)}} > a(\epsilon)$,
\begin{equation}\label{eq: Theorem: ratio of norms of in r ball}
  \mathcal{D} = \norm{\hat{\bm{\theta}}^{(2)}} - \norm{\hat{\bm{\theta}}^{(1)}}
  \geq  a(\epsilon) - a_0(\epsilon)
  =  1 + L(\epsilon)   > L(\epsilon)
\end{equation}
and the lemma follows from Proposition~\ref{prop:D}.

If \eqref{eq: Heter Lemma: first case} holds, then \eqref{eq: a choice
  first lemma} implies that both
$\norm{\hat{\bm{\theta}}^{(1)}} > a_0(\epsilon)$ and
$\norm{\hat{\bm{\theta}}^{(2)}} > a_0(\epsilon)$, and the lemma
follows from Proposition~\ref{prop:a_not}.
\end{proof}
\begin{lemma}\label{lem: Heter: min of \J in the ball}
  There exists a number $b_0\in\PositiveReals$ such that for any
  $b \geq b_0$, $\JZ$ attains at least one minimum in $\Sset^b$, at a
  point where $\JZ < \JMS$. All such points satisfy
  (\ref{eq:Heter:eq_from_first_derivatives_prelim}).
\end{lemma}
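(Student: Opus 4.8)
The plan is to regard $\J$ as a function on $\augmentedSpaceRealization$ of the pair of (deterministic) estimate vectors and to combine the compactness of $\Sset^b$ with Lemma~\ref{lem: Heter: \J>MMSE-eps}. First I would verify that $\J$ is continuous on $\Sset^b$: for each fixed $\bm{\theta}$ the integrand $\norm{\hat{\bm{\theta}}^{(1)}-\bm{\theta}}^2\wedge\norm{\hat{\bm{\theta}}^{(2)}-\bm{\theta}}^2$ is continuous in $(\hat{\bm{\theta}}^{(1)},\hat{\bm{\theta}}^{(2)})$, and on $\Sset^b$ it is dominated by $2b^2+2\norm{\bm{\theta}}^2$, which is integrable by the finite-second-moment assumption. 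Dominated convergence then carries continuity through the expectation, and since $\Sset^b$ is closed and bounded, hence compact, Weierstrass' theorem yields a minimizer of $\J$ on $\Sset^b$.

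Next I would exhibit one point of $\Sset^b$ whose cost is strictly below $\JMS$. Writing $\eta\dfn\bm{\theta}-\thetaMS$ and taking the symmetric dispersed pair $\hat{\bm{\theta}}^{(1)}=\thetaMS+t\bm{v}$, $\hat{\bm{\theta}}^{(2)}=\thetaMS-t\bm{v}$ for a unit vector $\bm{v}$ and $t>0$, a direct expansion gives
\[
\norm{\hat{\bm{\theta}}^{(1)}-\bm{\theta}}^2\wedge\norm{\hat{\bm{\theta}}^{(2)}-\bm{\theta}}^2=\norm{\eta}^2+t^2-2t\abs{\langle\bm{v},\eta\rangle},
\]
whence $\J=\JMS+t^2-2t\,\expect\abs{\langle\bm{v},\eta\rangle}$. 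Selecting $\bm{v}$ along a direction of positive variance of $\bm{\theta}$ (one exists unless $\bm{\theta}$ is almost surely constant, in which case $\JMS=0$ and the claim is vacuous) makes $\expect\abs{\langle\bm{v},\eta\rangle}>0$, so for small $t>0$ the linear term dominates and $\J<\JMS$. I denote this value $\J_0$ and its location $\hat{\bm{\theta}}_0$.

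The crux is to confine the minimizer to the interior of $\Sset^b$ for all large $b$, so that stationarity applies. Setting $\epsilon\dfn\JMS-\J_0>0$, Lemma~\ref{lem: Heter: \J>MMSE-eps} supplies $a(\epsilon)$ with $\J>\JMS-\epsilon=\J_0$ whenever $\hat{\bm{\theta}}\not\in\Sset^{a(\epsilon)}$. I would then fix $b_0$ exceeding both $a(\epsilon)$ and $\norm{\thetaMS}+t$, so that $\hat{\bm{\theta}}_0\in\Sset^{b_0}$. For any $b\geq b_0$ the minimum over $\Sset^b$ is at most $\J_0$, while every point with $\norm{\hat{\bm{\theta}}^{(1)}}\vee\norm{\hat{\bm{\theta}}^{(2)}}=b$ lies outside $\Sset^{a(\epsilon)}$ and hence has cost exceeding $\J_0$; therefore the minimizer is interior. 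Since its cost is below $\JMS$, it cannot lie on the non-differentiable locus $\hat{\bm{\theta}}^{(1)}=\hat{\bm{\theta}}^{(2)}$ (where $\J\geq\JMS$), nor can either $\Pvor(\mathcal{V}_i)$ vanish (else $\J$ collapses to a single-estimator MSE, which is $\geq\JMS$). Hence $\J$ is differentiable at this interior minimizer and its gradient, the unconditional analog of \eqref{eq: grad \JZ}, must vanish, giving $\hat{\bm{\theta}}^{(i)}=\EvorI\bm{\theta}$, i.e.\ (\ref{eq:Heter:eq_from_first_derivatives_prelim}).

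The step I expect to be the main obstacle is this interior-confinement argument: one must exclude the possibility that the minimum sits on the boundary sphere, where stationarity need not hold. Lemma~\ref{lem: Heter: \J>MMSE-eps} is precisely the tool for this, and the key is to extract $\epsilon$, and thereby $b_0$, from the strict gap $\JMS-\J_0$ produced by the dispersion construction, ensuring that the inexpensive interior point $\hat{\bm{\theta}}_0$ beats every boundary point uniformly for all $b\geq b_0$.
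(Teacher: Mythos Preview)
Your proof is correct and follows essentially the same architecture as the paper's: continuity plus compactness of $\Sset^b$ give a minimizer via Weierstrass; a concrete pair with cost strictly below $\JMS$ fixes $\epsilon$; Lemma~\ref{lem: Heter: \J>MMSE-eps} then forces the minimizer into the interior; and interiority plus exclusion of the diagonal yields differentiability and hence the centroid equations.

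The details differ in two places, both in your favor. First, to produce a pair with $\J<\JMS$, the paper takes the asymmetric choice $\hat{\bm{\zeta}}^{(1)}=\thetaMS$, $\hat{\bm{\zeta}}^{(2)}\neq\thetaMS$ and invokes the strict Voronoi inequality \eqref{eq:defi:D2} on the region $\mathcal{V}_2(\hat{\bm{\zeta}})$ (of positive measure) to get $\J(\hat{\bm{\zeta}})<\varphi(\thetaMS)=\JMS$; you instead use the symmetric pair $\thetaMS\pm t\bm v$ and the explicit identity $\J=\JMS+t^2-2t\,\expect\abs{\langle\bm v,\eta\rangle}$, which is quantitatively sharper and foreshadows the Gaussian analysis in Section~\ref{sec:est_gauss}. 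Second, to rule out the non-differentiable locus $\hat{\bm{\theta}}^{(1)}=\hat{\bm{\theta}}^{(2)}$, the paper repeats a dispersion construction (building $\hat{\bm{\xi}}$ with $\J(\hat{\bm{\xi}})<\J(\hat{\bm{\psi}})$); your one-line observation that on the diagonal $\J=\varphi(\hat{\bm{\theta}}^{(1)})\geq\JMS$, while the minimizer already satisfies $\J<\JMS$, is cleaner. Your extra remark that $\Pvor(\mathcal{V}_i)>0$ at the minimizer (needed so that the vanishing gradient actually yields \eqref{eq:Heter:eq_from_first_derivatives_prelim} rather than $0=0$) is a point the paper leaves implicit.
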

\begin{proof}
  Assuming that the first moment of the joint distribution of
  $\bm{\theta}$ and $\bm{Z}$ is finite, let
  $\hat{\bm{\zeta}}^{(1)} \dfn \thetaMS$ and
  $\hat{\bm{\zeta}}^{(2)} \dfn c \bm{e_{\theta_1}}$ {({recalling that}
    $\bm{e_{\theta_1}} \in \mathbb{R}^n$ is the unit vector along the
    first standard basis vector of $\thetaSpace$}), with $c\in (0, 1)$
  such that $\hat{\bm{\zeta}}^{(2)} \neq \thetaMS$. To show that
  $\JZ(\hat{\bm{\zeta}})<\JMS$, where $\hat\zeta$ comprises the
  estimators $\hat{\bm{\zeta}}^{(1)}$ and $\hat{\bm{\zeta}}^{(2)}$, we
  use the law of total probability to rewrite \eqref{eq: definition of
    \JZ} as
\begin{equation}\label{eq: app: AAAxi}
  \JZ(\hat{\bm{\zeta}})  = \EZvorOneZeta ( 
  \norm{\hat{\bm{\zeta}}^{(1)}-\bm{\theta}}^2 
  ) \PZvor(\mathcal{V}_1(\hat{\bm{\zeta}}))   
  +
  \EZvorTwoZeta ( 
  \norm{\hat{\bm{\zeta}}^{(2)}-\bm{\theta}}^2 ) \PZvor(\mathcal{V}_2(\hat{\bm{\zeta}})).
   \end{equation}
Because $\bm{\theta}$ is continuous in $\Theta$ and
$\hat{\bm{\zeta}}^{(2)}$ has a bounded norm,
$\PZvor(\mathcal{V}_2(\hat{\bm{\zeta}})) > 0$.  Thus, using the
definition \eqref{eq:defi:D2} of $\mathcal{V}_2$, \eqref{eq: app:
  AAAxi} yields
\begin{align}\label{eq:J_lt_JMS}
\JZ(\hat{\bm{\zeta}})
&  < \EZvorOneZeta ( 
\norm{\hat{\bm{\zeta}}^{(1)}-\bm{\theta}}^2 
) \PZvor(\mathcal{V}_1(\hat{\bm{\zeta}}))   \notag  \\
& \quad +
\EZvorTwoZeta ( 
\norm{\hat{\bm{\zeta}}^{(1)}-\bm{\theta}}^2 ) \PZvor(\mathcal{V}_2(\hat{\bm{\zeta}}))
=
\varphi(\hat{\bm{\zeta}}^{(1)})
=\JMS.
\end{align}

We next choose $b_0$ based on $\hat{\bm{\zeta}}$. Let
\begin{equation}\label{eq: Lemma: eps choice}
\epsilon \dfn \JMS - \JZ({\hat{\bm{\zeta}}}).
\end{equation}
According to \eqref{eq:J_lt_JMS} $\epsilon>0$, and Lemma~\ref{lem:
  Heter: \J>MMSE-eps} states that there exists a number
$a(\epsilon)\in\PositiveReals$ such that if
$\hat{\bm{\theta}} \not\in \Sset^{a(\epsilon)}$
then
$\JZ({\hat{\bm{\theta}}}) > \JMS-\epsilon =
\JZ({\hat{\bm{\zeta}}})$. Set
    \begin{equation}
    b_0 \dfn 1 + a(\epsilon) \vee \norm{\thetaMS}
    \end{equation}
    and consider the set $\Sset^b$ for any number $b \geq b_0$.
    Because $\Sset^b$ is closed and bounded, it is compact. Since
    $\JZ$ is continuous everywhere, it is necessarily continuous
    in $\Sset^b$. Thus, according to the Weierstrass extreme value
    theorem, $\JZ$ attains a minimum in at least one point in
    $\Sset^b$ (the cost at all such points is, of course, identical).
    Henceforth denoting any of these minimum points as
    $\bm{\hat{\bm{\psi}}}$, the first part of the lemma states that
    $\JZ(\bm{\hat{\bm{\psi}}}) < \JMS$.  To show this, we notice
    that
\begin{equation}
\norm{\hat{\bm{\zeta}}^{(1)}}
\vee
\norm{\hat{\bm{\zeta}}^{(2)}}
=
\norm{\thetaMS}  \vee c
 < b_0
\end{equation}
so that
$\hat{\bm{\zeta}} \in \interior{(\Sset^{b_0})}$,
where we use the notation $\interior{A}$ for the interior of $A$, and,
consequently,
$\hat{\bm{\zeta}} \in \interior{(\Sset^{b})}$.  Since
$\JZ(\hat{\bm{\psi}}) \leq \JZ(\hat{\bm{\zeta}})$, the first part of the lemma
follows upon invoking \eqref{eq:J_lt_JMS}.

Continuing to the second part of the lemma, we now prove, by
contradiction, that $\hat{\bm{\psi}}$ satisfies
(\ref{eq:Heter:eq_from_first_derivatives_prelim}).  If it does not,
then either 1) it is an interior point where $\JZ$ is
non-differentiable, or 2) it is located along the boundary of
$\Sset^b$. We contradict each case separately.

We first show that none of the minimum points can be located at an
interior point where $\JZ$ is non-differentiable.  Let
\begin{equation}\label{eq: Y defi}
  \mathcal{Y} \dfn
  \left\{
    \hat{\bm{\theta}} \in \augmentedSpaceRealization \mid
    \hat{\bm{\theta}}^{(1)} = \hat{\bm{\theta}}^{(2)}
  \right\}.
\end{equation}
Each point in $\mathcal{Y}$ is a pair of estimates for which $\JZ$ is
non-differentiable.  Assume that $\hat{\bm{\psi}} \in \interior{(\Sset^b)} \cap \mathcal{Y}$. Then $\hat{\bm{\psi}}^{(1)}=\hat{\bm{\psi}}^{(2)}$, giving
\begin{align}\label{eq: app: AAApsi}
\JZ(\hat{\bm{\psi}}) &=
\varphi(\hat{\bm{\psi}}^{(1)}).
\end{align}
Now set $\hat{\bm{\xi}}^{(1)} \dfn \hat{\bm{\psi}}^{(1)}$ and
$\hat{\bm{\xi}}^{(2)} \dfn c b \bm{e_{\theta_1}}$ with $c\in (0,1) $
such that $\hat{\bm{\xi}}^{(2)} \neq \hat{\bm{\psi}}^{(1)}$.
Obviously, $\hat{\bm{\xi}} \not\in \mathcal{Y}$, and
\begin{equation}
\norm{\hat{\bm{\xi}}^{(1)}}
\vee
\norm{\hat{\bm{\xi}}^{(2)}}
=
\norm{\hat{\bm{\psi}}^{(1)}} \vee cb < b
\end{equation}
so that $\hat{\bm{\xi}} \in \interior{(\Sset^b)}$.
Analogously to \eqref{eq:J_lt_JMS} and using \eqref{eq: app: AAApsi}
we have
\begin{equation}
\JZ(\hat{\bm{\xi}})
<
\varphi(\hat{\bm{\xi}}^{(1)})
=
\varphi(\hat{\bm{\psi}}^{(1)})
=
\JZ(\hat{\bm{\psi}}),
\end{equation}
contradicting the assumption that $\JZ$ has a minimum at
$\hat{\bm{\psi}}\in \interior{(\Sset^b)}$.

Now assume that $\hat{\bm{\psi}}$ is located along the boundary of
$\Sset^b$. Then $\hat{\bm{\psi}} \not\in \Sset^{a(\epsilon)}$, and,
according to Lemma~\ref{lem: Heter: \J>MMSE-eps},
$\JZ({\hat{\bm{\psi}}}) > \JZ({\hat{\bm{\zeta}}})$. Recalling that
$\hat{\bm{\zeta}} \in \interior{(\Sset^{b})}$ 
yields a contradiction to the assumption that $\hat{\bm{\psi}}$ is a minimum
point of $\JZ$ in
$\Sset^{b}$.
\end{proof}

Returning to the proof of Theorem~\ref{theo: Heter optimality}, let
$b_0$ be a number satisfying Lemma \ref{lem: Heter: min of \J in the
  ball}, and consider $b \geq b_0$ . Then, according to Lemma
\ref{lem: Heter: min of \J in the ball}, there exists at least one
minimizer of $\JZ$ in $\Sset^{b}$, at a point that satisfies
(\ref{eq:Heter:eq_from_first_derivatives_prelim}).  Denote the value
of $\JZ$ at any such minimum point as $\JZ^*$ (recall that
the costs at all minimum points are identical). Then, according to
Lemma \ref{lem: Heter: min of \J in the ball},
\begin{equation}\label{eq: Theorem: \J* < MMSE}
\JZ^* < \JMS.
\end{equation}
Define
\begin{equation}\label{eq: defi: eps2}
\epsilon \dfn \JMS-\JZ^*.
\end{equation}
According to Lemma \ref{lem: Heter: \J>MMSE-eps}, there exists a
number $a(\epsilon)>0$ such that
\begin{equation}
 \JZ({\hat{\bm{\theta}}}) > \JMS-\epsilon = \JZ^* \quad 
\forall 
\hat{\bm{\theta}} \not\in \Sset^{a(\epsilon)}.
\end{equation}

Now consider the set $\Sset^{c}$ with
\begin{equation}\label{eq: Theorem: a choice}
c \dfn a(\epsilon) \vee b.
\end{equation}
According to Lemma \ref{lem: Heter: min of \J in the ball} and based
on (\ref{eq: Theorem: a choice}), there is {at least one} minimum
point in $\Sset^{c}$, where the cost {(which is identical for all
  minimum points)} does not exceed $\JZ^*$ because
$\Sset^{a(\epsilon)} \in \Sset^{c}$.  At {any such} point, thus,
the cost is strictly smaller than $\JMS$ due to (\ref{eq: Theorem: \J*
  < MMSE}). Outside of $\Sset^{c}$, $\JZ>\JZ^*$ due to
Lemma \ref{lem: Heter: \J>MMSE-eps} and (\ref{eq: Theorem: a choice}).
\end{proof}

\section{Proof of Theorem~\ref{theo: Hier optimality}}\label{sec:app_proof_opt_hier}
\begin{proof}
  Let $\mathcal{T} \subset \augmentedSpaceRealization $ be defined as
  $\mathcal{T} \dfn \{ \hat{\bm{\theta}}^{(1)} = \thetaMS \} \times
  \EstimSpaceRealization$.
  In the hierarchical problem \eqref{eq:Hier:Definition} the objective
  is to minimize a restriction of the cost $\JZ$, originally defined
  in \eqref{eq: definition of \JZ} on
  $\augmentedSpaceRealization$, to the subdomain $\mathcal{T}$, that
  is
\begin{equation}\label{eq: JZrest definition}
  \JZrest (\hat{\bm{\theta}}^{(2)})
  \dfn \JZ ( \hat{\bm{\theta}}^{(1)}
  = \thetaMS, \hat{\bm{\theta}}^{(2)}).
\end{equation}
As a restriction of $\JZ$, $\JZrest$ is continuous everywhere and differentiable except at
$\bm{\hat{\bm{\theta}}}^{(2)} = \thetaMS$.

The proof of Theorem~\ref{theo: Hier optimality} follows the proof of
Theorem~\ref{theo: Heter optimality} in a restricted form. We begin
with the following lemma.
\begin{lemma}\label{lem: Hier: \J>MMSE-eps}
  For any number $\epsilon\in\PositiveReals$ there exists a number
  $a(\epsilon)\in\PositiveReals$ such that if
  $\norm{\hat{\bm{\theta}}^{(2)}} > a(\epsilon)$ then
  $\JZrest({\hat{\bm{\theta}}^{(2)}}) > \JMS-\epsilon$.
\end{lemma}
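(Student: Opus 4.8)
The plan is to recognize this statement as the hierarchical counterpart of Lemma~\ref{lem: Heter: \J>MMSE-eps}, specialized to the pinned first estimator $\hat{\bm{\theta}}^{(1)} = \thetaMS$, and to piggyback on the machinery already developed in Proposition~\ref{prop:D}. The central observation is that fixing the first estimator at the MMSEE collapses the quantity $\varphi(\hat{\bm{\theta}}^{(1)})$ that appears throughout the heterarchical analysis exactly to the MMSE: by definition of the MMSEE, $\varphi(\thetaMS) = \expect\norm{\thetaMS-\bm{\theta}}^2 = \JMS$. This upgrades the bound $\J \geq \JMS - f$ of Proposition~\ref{prop:D} to the sharper identity $\Jrest = \JMS - f$ with $f \geq 0$, which makes the whole claim transparent.

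Concretely, I would first write $\Jrest(\hat{\bm{\theta}}^{(2)}) = \J(\hat{\bm{\theta}}^{(1)} = \thetaMS, \hat{\bm{\theta}}^{(2)})$ and, since the lemma concerns only large $\norm{\hat{\bm{\theta}}^{(2)}}$, restrict attention to $\norm{\hat{\bm{\theta}}^{(2)}} \geq \norm{\thetaMS}$, so that the second estimator is the one of larger norm and $\mathcal{D} = \norm{\hat{\bm{\theta}}^{(2)}} - \norm{\thetaMS}$. The decomposition \eqref{eq: \J with f1 f2}, together with the vanishing of $f$ as $\mathcal{D}\to\infty$ established inside the proof of Proposition~\ref{prop:D}, then yields $\Jrest = \JMS - f$ with $f \geq 0$ and $\lim_{\mathcal{D}\to\infty} f = 0$. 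Given $\epsilon \in \PositiveReals$, Proposition~\ref{prop:D} supplies $L(\epsilon)$ with the property that $\mathcal{D} > L(\epsilon)$ forces $f < \epsilon$; setting $a(\epsilon) \dfn \norm{\thetaMS} + L(\epsilon)$, the hypothesis $\norm{\hat{\bm{\theta}}^{(2)}} > a(\epsilon)$ gives $\mathcal{D} > L(\epsilon)$, whence $\Jrest = \JMS - f > \JMS - \epsilon$, which is exactly the assertion.

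Since the result is essentially a corollary of Proposition~\ref{prop:D}, I do not anticipate a genuine obstacle, and the care needed is only in the bookkeeping. One must confirm that the second estimator is the one of larger norm before invoking the $\mathcal{D}$-based probability estimates, which is automatic once $a(\epsilon) \geq \norm{\thetaMS}$; and one must use the exact identity $\varphi(\thetaMS) = \JMS$ rather than the weaker inequality $\JMS \leq \varphi(\hat{\bm{\theta}}^{(1)})$ exploited in the heterarchical proof, since it is the equality that pins the leading term precisely at the MMSE. No probabilistic estimate beyond those already proved for the heterarchical problem is required; alternatively, one could replay the short derivation leading to \eqref{eq: \J with f1 f2} verbatim with $\hat{\bm{\theta}}^{(1)}=\thetaMS$ to keep the argument self-contained.
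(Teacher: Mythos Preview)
Your proposal is correct and follows essentially the same route as the paper: invoke Proposition~\ref{prop:D} with $\hat{\bm{\theta}}^{(1)}=\thetaMS$ and choose $a(\epsilon)$ so that $\norm{\hat{\bm{\theta}}^{(2)}}>a(\epsilon)$ forces $\mathcal{D}>L(\epsilon)$. The paper takes $a(\epsilon)=1+L(\epsilon)+\norm{\thetaMS}$ rather than your $a(\epsilon)=\norm{\thetaMS}+L(\epsilon)$, but your choice already yields the strict inequality $\mathcal{D}>L(\epsilon)$ directly from the strict hypothesis, so this is cosmetic; likewise your remark that $\varphi(\thetaMS)=\JMS$ sharpens the bound to an identity is true but unnecessary, since Proposition~\ref{prop:D} already concludes $\J>\JMS-\epsilon$ from the weaker inequality.
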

\begin{proof}
  Because $\JZrest$ is a restriction of $\JZ$, Proposition~\ref{prop:D}
  applies with $\hat{\bm{\theta}}^{(1)} = \thetaMS$; hence, there exists a
  number $L(\epsilon)\in\PositiveReals$ such that if
  $\mathcal{D}>L(\epsilon)$ then $\JZrest > \JMS - \epsilon$.  Set
\begin{equation}\label{eq: Hier Lemma: a choice}
a(\epsilon) \dfn 1 + L(\epsilon) + \norm{\thetaMS}
\end{equation}
and assume that $\norm{\hat{\bm{\theta}}^{(2)}} > a(\epsilon)$.  The
lemma follows upon observing that
\begin{equation}
\mathcal{D}  \geq  1 + L(\epsilon) > L(\epsilon).
\end{equation}
\end{proof}
\begin{lemma}\label{lem: Hier: min of \J in the ball}
  There exists a number $b_0\in\PositiveReals$ such that for any
  $b \geq b_0$, $\JZrest$ attains a minimum in
  $\{ \hat{\bm{\theta}}^{(2)}\in\EstimSpaceRealization \mid
  \norm{\hat{\bm{\theta}}^{(2)}} \leq b \}$,
  at a point where $\JZrest < \JMS$. This point satisfies
  \eqref{eq:Hier:eq_from_first_derivatives_prelim}.
\end{lemma}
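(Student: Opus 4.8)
The plan is to transcribe the proof of Lemma~\ref{lem: Heter: min of \J in the ball} to the restricted cost $\Jrest$, exploiting that here the first estimator is permanently fixed at $\thetaMS$. First I would exhibit a bounded second estimator whose cost undercuts $\JMS$. Take $\hat{\bm{\zeta}}^{(2)} \dfn c\,\bm{e_{\theta_1}}$ with $c\in(0,1)$ chosen so that $\hat{\bm{\zeta}}^{(2)} \neq \thetaMS$, and expand $\Jrest(\hat{\bm{\zeta}}^{(2)})$ by the law of total probability exactly as in \eqref{eq: app: AAAxi}. Since $\bm{\theta}$ is continuous and $\hat{\bm{\zeta}}^{(2)}$ is bounded, $\Pvor(\mathcal{V}_2)>0$, so the strict inequality \eqref{eq:defi:D2} on $\mathcal{V}_2$ gives, just as in \eqref{eq:J_lt_JMS}, $\Jrest(\hat{\bm{\zeta}}^{(2)}) < \varphi(\thetaMS) = \JMS$, where $\varphi$ is the single-estimator MSE introduced in the proof of Proposition~\ref{prop:D}.

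Next I would set $\epsilon \dfn \JMS - \Jrest(\hat{\bm{\zeta}}^{(2)}) > 0$ and invoke Lemma~\ref{lem: Hier: \J>MMSE-eps}, which supplies $a(\epsilon)\in\PositiveReals$ with $\norm{\hat{\bm{\theta}}^{(2)}} > a(\epsilon) \Rightarrow \Jrest > \JMS - \epsilon = \Jrest(\hat{\bm{\zeta}}^{(2)})$. Putting $b_0 \dfn 1 + a(\epsilon)$ guarantees that $\hat{\bm{\zeta}}^{(2)}$, whose norm is $c<1$, lies in the interior of the closed ball $\{\norm{\hat{\bm{\theta}}^{(2)}} \leq b\}$ for every $b \geq b_0$. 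That ball is compact and $\Jrest$ is continuous, so the Weierstrass extreme value theorem yields a minimizer $\hat{\bm{\psi}}^{(2)}$; since $\hat{\bm{\zeta}}^{(2)}$ is an interior competitor, $\Jrest(\hat{\bm{\psi}}^{(2)}) \leq \Jrest(\hat{\bm{\zeta}}^{(2)}) < \JMS$, which proves the first assertion.

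For the stationarity claim I would again argue by contradiction, ruling out the two ways $\hat{\bm{\psi}}^{(2)}$ could fail \eqref{eq:Hier:eq_from_first_derivatives_prelim}. The unique non-differentiable point of $\Jrest$ is $\hat{\bm{\theta}}^{(2)} = \thetaMS$, where the two estimators coincide and hence $\Jrest = \varphi(\thetaMS) = \JMS$; this contradicts $\Jrest(\hat{\bm{\psi}}^{(2)}) < \JMS$, so the minimizer is a point of differentiability. If $\hat{\bm{\psi}}^{(2)}$ lay on the boundary, then $\norm{\hat{\bm{\psi}}^{(2)}} = b \geq b_0 > a(\epsilon)$, so Lemma~\ref{lem: Hier: \J>MMSE-eps} would force $\Jrest(\hat{\bm{\psi}}^{(2)}) > \Jrest(\hat{\bm{\zeta}}^{(2)}) \geq \Jrest(\hat{\bm{\psi}}^{(2)})$, which is absurd. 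Hence $\hat{\bm{\psi}}^{(2)}$ is an interior point of differentiability where the first-order condition $\gradient_{\hat{\bm{\theta}}^{(2)}}\JZ = 0$ of \eqref{eq: grad \JZ} holds, and — noting that $\hat{\bm{\psi}}^{(2)} \neq \thetaMS$ makes $\mathcal{V}_2$ of positive measure — this is exactly \eqref{eq:Hier:eq_from_first_derivatives_prelim}.

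I expect no serious obstacle, as the argument is a faithful specialization of Lemma~\ref{lem: Heter: min of \J in the ball} with $\hat{\bm{\theta}}^{(1)} \equiv \thetaMS$. The one structural simplification worth flagging is that the delicate step of the heterarchical proof, namely excluding an interior non-differentiable minimizer, which there required building an auxiliary dispersed pair $\hat{\bm{\xi}}$, collapses here to the single observation that the coincident-estimator configuration costs exactly $\JMS$ while the minimum is already known to be strictly below $\JMS$. The only genuinely substantive inputs are therefore the strict-improvement estimate $\Jrest(\hat{\bm{\zeta}}^{(2)}) < \JMS$, which rests on continuity of $\bm{\theta}$ guaranteeing $\Pvor(\mathcal{V}_2)>0$, and the coercivity-type bound of Lemma~\ref{lem: Hier: \J>MMSE-eps}, both of which are already in hand.
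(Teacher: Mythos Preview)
Your proposal is correct and mirrors the paper's own argument: the paper's proof is literally the one-line instruction to repeat the proof of Lemma~\ref{lem: Heter: min of \J in the ball} in restricted form, replacing $\J$ by $\Jrest$, $\Sset^b$ by its intersection with $\{\norm{\hat{\bm{\theta}}^{(1)}}=\norm{\thetaMS}\}$, Lemma~\ref{lem: Heter: \J>MMSE-eps} by Lemma~\ref{lem: Hier: \J>MMSE-eps}, and \eqref{eq:Heter:eq_from_first_derivatives_prelim} by \eqref{eq:Hier:eq_from_first_derivatives_prelim}. Your observation that the non-differentiable case collapses (since the unique non-differentiable point $\hat{\bm{\theta}}^{(2)}=\thetaMS$ costs exactly $\JMS$, already excluded) is a legitimate shortcut over the auxiliary $\hat{\bm{\xi}}$ construction that a literal transcription would carry over, but this is a cosmetic simplification rather than a different route.
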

\begin{proof}
  The proof follows the proof of Lemma \ref{lem: Heter: min of \J in the ball} in restricted form, where $\JZ$ is replaced by $\JZrest$,
  $\Sset^b$ by its subset
  $\Sset^b \cap \{
  \norm{\hat{\bm{\theta}}^{(1)}} = \norm{\thetaMS} \}$,
  Lemma \ref{lem: Heter: \J>MMSE-eps} by Lemma \ref{lem: Hier:
    \J>MMSE-eps}, and
  (\ref{eq:Heter:eq_from_first_derivatives_prelim}) by 
  (\ref{eq:Hier:eq_from_first_derivatives_prelim}).
\end{proof}
Having these two lemmas on hand, the proof of Theorem~\ref{theo: Hier
  optimality} follows the proof of Theorem~\ref{theo: Heter
  optimality} in a restricted form, where $\JZ$ is replaced by $\JZrest$,
the domain $\augmentedSpaceRealization$ by the subdomain
$\mathcal{T}$, Lemma \ref{lem: Heter: \J>MMSE-eps} by Lemma \ref{lem:
  Hier: \J>MMSE-eps}, Lemma \ref{lem: Heter: min of \J in the ball} by
Lemma \ref{lem: Hier: min of \J in the ball}, and
(\ref{eq:Heter:eq_from_first_derivatives_prelim}) by
(\ref{eq:Hier:eq_from_first_derivatives_prelim}).
\end{proof}


\section{The Gaussian Altruism Equations}\label{sec:altruistic_eq}
This Appendix investigates equations (\ref{eq:Heter:altruism
  equation}) and (\ref{eq:Hier:altruism equation}), the heterarchical
and hierarchical altruism equations, respectively, in the
\emph{Gaussian} case. Recall that both equations are scalar, algebraic
equations, that depend on $\hat{u}_m$.

In the sequel we define $x$ to be the realization of the random
variable $\hat{u}^m$ corresponding to the realization $z$ of the
measurement vector $Z$. Furthermore, given the realization $z$, we
define the conditional random vector $Y$ as
\begin{equation}
  \label{eq:Y_def}
  Y \dfn u_1 \mid Z=z
\end{equation}
and assume that $Y\sim\gaus{\mu_Y}{\sigma_Y^2}$.
\subsection{Heterarchical Altruistic Estimation}\label{Appendix:Heter}
For a given realization $z$ of $Z$, and using the definition
\eqref{eq:Y_def}, the heterarchical altruism equation
(\ref{eq:Heter:altruism equation}) reduces to
\begin{equation} \label{eq:fundamental}
  x = \frac{1}{2} [ \expect(Y\mid Y<x) + \expect(Y\mid Y>x) ].
\end{equation}
Let $X$ be the standardized version of $Y$
\begin{equation}\label{eq:chi_denotation}
X \dfn \frac{Y - \mu_Y}{\sigma_Y}
\end{equation}
with probability density function and cumulative density function
$\phi$ and $\Phi$, respectively. Then, letting $\chi$ be the
realization of $X$ corresponding to the realization $x$ of $Y$, we
have~\cite[Theorem 19.2]{Greene:2012}
\begin{subequations}\label{eq:rohatgi}
  \begin{align}
    \expect(Y\mid Y > x) & = 
  \mu_Y + \sigma_Y\frac{\phi(\chi)}{1 - \Phi(\chi)}\label{eq: Appendix: Eyx}\\
    \expect(Y\mid Y < x) & = \mu_Y - \sigma_Y\frac{\phi(\chi)}{\Phi(\chi)}.
\end{align}
\end{subequations}
Using equations \eqref{eq:rohatgi} in \eqref{eq:fundamental} yields
\begin{equation}\label{eq:Heter:chi_eq}
\frac{\phi(\chi)}{2[1-\Phi(\chi)]}
-
\frac{\phi(\chi)}{2\Phi(\chi)}
- \chi = 0.
\end{equation}

It is easy to see that $\chi=0$ is a solution of
(\ref{eq:Heter:chi_eq}).  Noting \eqref{eq:chi_denotation}, this
solution of the realization-based \eqref{eq:Heter:chi_eq} is
equivalent to \eqref{eq:Heter:um_is_mu1}, the solution of the general
equation \eqref{eq:Heter:altruism equation}, thus proving
Proposition~\ref{prop:Heter altruistic}.  In the following lemma we
prove that this solution is unique.
\begin{lemma}
  $\chi=0$ is the only solution of (\ref{eq:Heter:chi_eq}).
\end{lemma}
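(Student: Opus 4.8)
The plan is to read the left-hand side of (\ref{eq:Heter:chi_eq}) as the gap between $\chi$ and the average of two \emph{truncated} conditional means of the standard normal $X$, and then to show that this gap is a strictly monotone function of $\chi$. Writing $g(\chi)$ for the left-hand side of (\ref{eq:Heter:chi_eq}), I would first note that $\phi(\chi)/[1-\Phi(\chi)]=\expect(X\mid X>\chi)$ and $-\phi(\chi)/\Phi(\chi)=\expect(X\mid X<\chi)$ are the upper- and lower-tail truncated means, so that
\begin{equation*}
  g(\chi) = \tfrac{1}{2}\bigl[\expect(X\mid X>\chi) + \expect(X\mid X<\chi)\bigr] - \chi .
\end{equation*}
At $\chi=0$ the symmetry $\phi(-\chi)=\phi(\chi)$, $\Phi(-\chi)=1-\Phi(\chi)$ makes the two conditional means negatives of one another, so $g(0)=0$, recovering the known root; the content of the lemma is that there are no others.

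The next step is to differentiate $g$. Using the elementary truncated-normal moment identities $\expect(X\mid X>\chi)=\phi(\chi)/[1-\Phi(\chi)]$ and $\expect(X^2\mid X>\chi)=1+\chi\,\phi(\chi)/[1-\Phi(\chi)]$ together with their lower-tail analogues, a direct computation gives
\begin{equation*}
  \frac{d}{d\chi}\,\expect(X\mid X>\chi) = 1 - \mathrm{Var}(X\mid X>\chi),
  \qquad
  \frac{d}{d\chi}\,\expect(X\mid X<\chi) = 1 - \mathrm{Var}(X\mid X<\chi).
\end{equation*}
Substituting these into the expression for $g$ collapses the $\chi$-terms and leaves the compact identity
\begin{equation*}
  g'(\chi) = -\tfrac{1}{2}\bigl[\mathrm{Var}(X\mid X>\chi) + \mathrm{Var}(X\mid X<\chi)\bigr].
\end{equation*}

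The conclusion is then immediate. A standard normal truncated to a half-line is a nondegenerate, absolutely continuous distribution, so each conditional variance above is strictly positive; hence $g'(\chi)<0$ for every $\chi\in\mathbb{R}$ and $g$ is strictly decreasing on the whole real line. Combined with $g(0)=0$, strict monotonicity forces $\chi=0$ to be the \emph{only} zero of $g$, which proves the lemma.

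I expect the main obstacle to be the middle step: establishing the derivative identity $g'(\chi)=-\tfrac12[\mathrm{Var}(X\mid X>\chi)+\mathrm{Var}(X\mid X<\chi)]$ cleanly. The individual ingredients are routine — they follow from $\tfrac{d}{d\chi}\{\phi/(1-\Phi)\}=(\phi/(1-\Phi))\,(\phi/(1-\Phi)-\chi)$ and the truncated second-moment formula — but one must assemble the two tails with the correct signs so that the linear terms cancel and only the two variances remain. Once this identification is secured, positivity of a truncated variance makes the strict monotonicity transparent and avoids any appeal to Mills-ratio inequalities, sign charts, or case analysis.
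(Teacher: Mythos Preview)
Your argument is correct and is genuinely different from the paper's. The paper defines $f_\Heter(\chi)$ as the left-hand side, observes anti-symmetry, and then splits $f_\Heter=f^{(1)}_\Heter+f^{(2)}_\Heter$ with an ad~hoc additive constant $\phi(0)$; it clears denominators to form $g^{(1)}_\Heter$ and $g^{(2)}_\Heter$ and performs a multi-derivative sign analysis (up to third derivatives for $g^{(2)}_\Heter$, including an appeal to L'H\^{o}pital for tail behavior) to show each piece is negative on $(0,\infty)$. Your route avoids all of that: by recognizing the two terms as the upper- and lower-tail truncated means of the standard normal, the known identities $\tfrac{d}{d\chi}\expect(X\mid X>\chi)=1-\mathrm{Var}(X\mid X>\chi)$ and its lower-tail analogue collapse the derivative to $g'(\chi)=-\tfrac12[\mathrm{Var}(X\mid X>\chi)+\mathrm{Var}(X\mid X<\chi)]<0$, giving strict monotonicity on all of $\mathbb{R}$ in one stroke. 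What you gain is transparency and brevity, and a proof that requires no case split, no anti-symmetry reduction, and no Mills-ratio inequalities; what the paper's approach buys is that it is entirely self-contained at the level of $\phi$ and $\Phi$, without invoking the truncated second-moment formula $\expect(X^2\mid X>\chi)=1+\chi\,\phi(\chi)/[1-\Phi(\chi)]$. Since that formula is elementary (one integration by parts), your version is the cleaner of the two.
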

\begin{proof}
  To prove the lemma we define the function
\begin{equation}
f_\Heter(\chi)
\dfn
\frac{\phi(\chi)}{2[1-\Phi(\chi)]}
-
\frac{\phi(\chi)}{2\Phi(\chi)}
- \chi
\end{equation}
and show that $\chi=0$ is its only zero.  Since $f_\Heter$ is an
anti-symmetric function, it is sufficient to prove that it does not
vanish in $\left(0,\infty\right)$. We do this by proving that
$f_\Heter(\chi)<0$ for all $\chi>0$.

Defining
\begin{align}
f^{(1)}_\Heter(\chi) &\dfn -\frac{\phi(\chi)}{2\Phi(\chi)}
+ \phi(0) - \frac{\chi}{2}
\\
\intertext{and}
f^{(2)}_\Heter(\chi) &\dfn \frac{\phi(\chi)}{2[1-\Phi(\chi)]}
- \phi(0) - \frac{\chi}{2}
\end{align}
yields
\begin{equation}
f_\Heter(\chi) = f^{(1)}_\Heter(\chi) + f^{(2)}_\Heter(\chi).
\end{equation}
We thus proceed to prove, separately, that both
\begin{equation}\label{eq:first_ineq}
f^{(1)}_\Heter(\chi) < 0 \quad \forall \chi>0
\end{equation}
and
\begin{equation}\label{eq:second_ineq}
f^{(2)}_\Heter(\chi) < 0 \quad \forall \chi>0.
\end{equation}

To prove \eqref{eq:first_ineq} we recast it as
\begin{equation}
  g^{(1)}_\Heter(\chi) < 0
  \quad
  \forall \chi>0
\end{equation}
where
\begin{equation}\label{eq:g_1_HT}
  g^{(1)}_\Heter(\chi) \dfn
  -\phi(\chi)
  + 2\phi(0)\Phi(\chi) - \chi\Phi(\chi).
\end{equation}
The definition \eqref{eq:g_1_HT} gives
\begin{equation}
g^{(1)}_\Heter(0) = 0.
\end{equation}
Since $g^{(1)}_\Heter(\chi)$ is a continuous function of its argument
over $\left(0,\infty\right)$, it suffices to show that it is
monotonically strictly decreasing in that interval. To this end, we
calculate its derivative
\begin{equation}
  g'^{(1)}_\Heter(\chi)
  =
  2\phi(0)\phi(\chi) - \Phi(\chi)
\end{equation}
and notice that, since $\phi(\chi)<\phi(0)$ and
$\Phi(\chi)>\frac{1}{2}$ in $\left(0,\infty\right)$,
\begin{equation}
  g'^{(1)}_\Heter(\chi)
  <
  2\phi(0)^2 - \frac{1}{2}
  \approx
  -0.1817  \quad \forall \chi >0.
\end{equation}
For illustrative purposes, the function $g^{(1)}_\Heter(\chi)$ and its
first derivative are depicted in Fig.~\ref{fig:Heter_chi_eq_g1}.
\begin{figure}[tbhp]
\centering
\includegraphics[width=\linewidth]{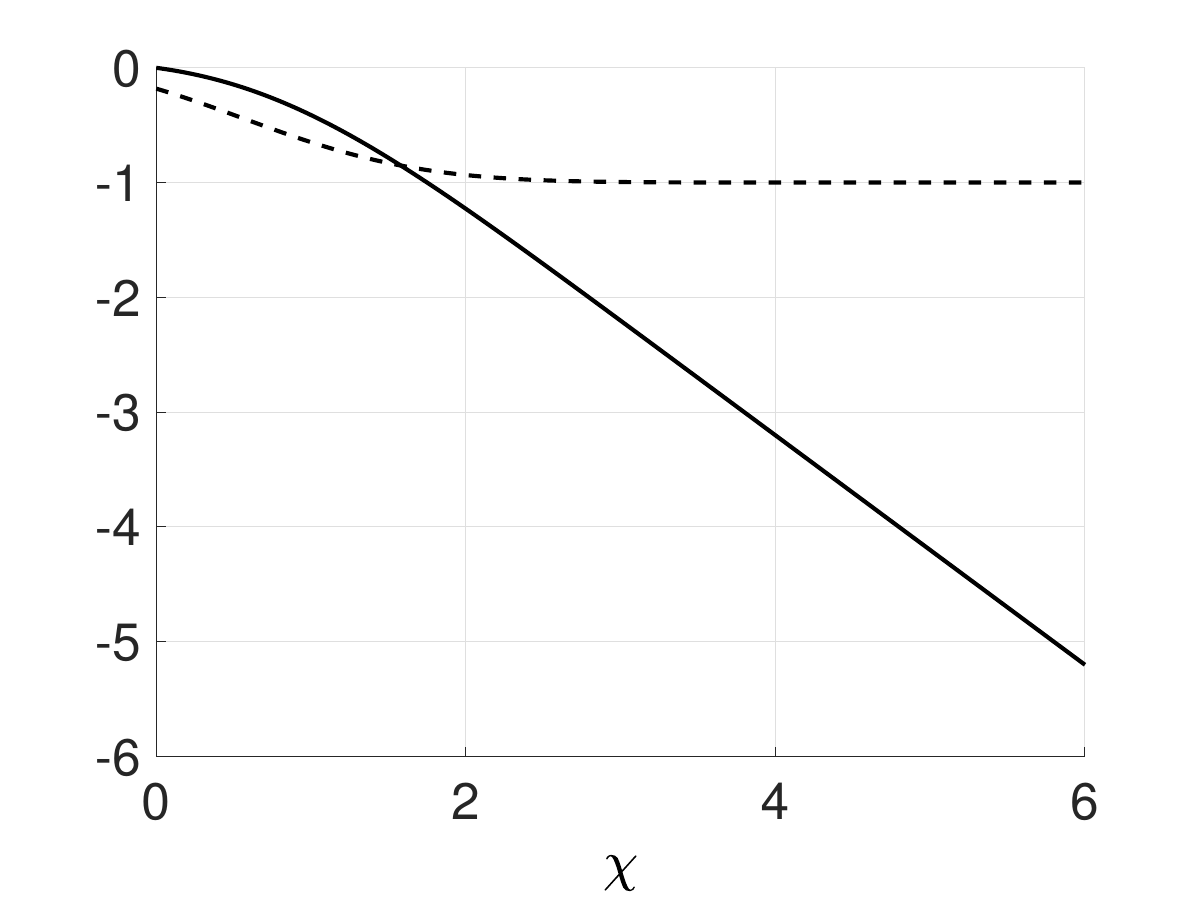}
\caption{The function $g^{(1)}_\Heter(\chi)$ (solid line) and its
  first derivative (dashed line).}
\label{fig:Heter_chi_eq_g1}
\end{figure}

To prove \eqref{eq:second_ineq}, we recast it as
\begin{equation}
\label{eq:def_g2}
  g^{(2)}_\Heter(\chi) < 0 \quad \forall \chi>0
\end{equation}
where
\begin{equation}
  g^{(2)}_\Heter(\chi) \dfn \phi(\chi) - 2\phi(0) + 2\phi(0)\Phi(\chi)
  - \chi[1-\Phi(\chi)].
\end{equation}
The proof rests on the continuity of $g^{(2)}_\Heter(\chi)$ and its
derivatives over $\left[0,\infty\right)$. Calculating the first three
derivatives of $g^{(2)}_\Heter(\chi)$ yields
\begin{align}
g'^{(2)}_\Heter(\chi)  & = \Phi(\chi) + 2\phi(0)\phi(\chi) - 1
\\
g''^{(2)}_\Heter(\chi) & = \phi(\chi)[1-2\phi(0)\chi]
\\
g'''^{(2)}_\Heter(\chi) & = \phi(\chi)[2\phi(0)(\chi^2-1)-\chi].
\end{align}
Clearly, $\chi_2\dfn\frac{1}{2\phi(0)}$ is the single zero of
$g''^{(2)}_\Heter(\chi)$ in its entire domain. Since
$g'''^{(2)}_\Heter(\chi_2) = -2\phi(0)\phi(\chi_2) < 0$, $\chi_2$ is
the single maximum point of $g'^{(2)}_\Heter(\chi)$, and we compute
\begin{equation}
g'^{(2)}_\Heter(\chi_2) \approx 0.04.
\end{equation}
We now investigate the behavior of $g'^{(2)}_\Heter(\chi)$ on both
sides of its single extremal point. Clearly,
$g'^{(2)}_\Heter(\chi) \to 0$ as $\chi\to\infty$.  Since $\chi_2$ is
the only extremal point of $g'^{(2)}_\Heter(\chi)$, this yields that
$g'^{(2)}_\Heter(\chi) > 0$ for all $\chi \geq \chi_2$, from which we
conclude that $g^{(2)}_\Heter(\chi)$ does not have any extremal point
in $\left[ \chi_2, \infty \right)$.

Turning our attention to $\chi < \chi_2$, we first notice that
$g'^{(2)}_\Heter(0) \approx -0.1817$.  Since
$g'^{(2)}_\Heter(\chi_2) >0$, the mean value theorem yields that
$g'^{(2)}_\Heter(\chi)$ must have a zero in $\left(0, \chi_2\right)$.
Moreover, since $\chi_2$ is the only extremal point of
$g'^{(2)}_\Heter(\chi)$, we conclude that $g'^{(2)}_\Heter(\chi)$ is
monotonically strictly increasing in $\left(0, \chi_2\right)$, so that its
zero in $\left(0, \chi_2\right)$ is unique. Denote that zero as
$\chi_1$. Clearly, $\chi_1$ is a unique minimum point of
$g^{(2)}_\Heter(\chi)$ in $\left[ 0, \chi_2\right]$, since
$g'^{(2)}_\Heter(\chi)<0$ for $0\leq\chi<\chi_1$ and
$g'^{(2)}_\Heter(\chi)>0$ for $\chi_1<\chi\leq\chi_2$. Noting that
$g^{(2)}_\Heter(0) = 0$ and $g^{(2)}_\Heter(\chi_2) \approx -0.0336$,
we conclude that $g^{(2)}_\Heter(\chi) < 0$ in
$(0,\chi_2]$.

To prove that $g^{(2)}_\Heter(\chi) < 0$ also for $\chi>\chi_2$, we
observe that $g^{(2)}_\Heter(\chi)\to 0$ as $\chi\to\infty$, which
follows from both
\begin{equation}
  \phi(\chi) - 2\phi(0) + 2\phi(0)\Phi(\chi)\to 0 
  \quad \text{as } \chi\to\infty
\end{equation}
and
\begin{equation}\label{eq:lhopital}
  \chi[1-\Phi(\chi)] \to 0 \quad \text{as } \chi\to\infty 
\end{equation}
where the latter limit results from using L'H\^{o}pital's rule. Since
$g^{(2)}_\Heter(\chi_2)<0$, the proof then follows from our previous
conclusion that $g^{(2)}_\Heter(\chi)$ does not have any extremal
point in $\left[\chi_2,\infty\right)$. This also concludes the proof
of the lemma.

For illustrative purposes, the function $g^{(2)}_\Heter(\chi)$ and its
first two derivatives are depicted in Fig.~\ref{fig:Heter_chi_eq_g2}.
\begin{figure}[tbhp]
\centering
\includegraphics[width=\linewidth]{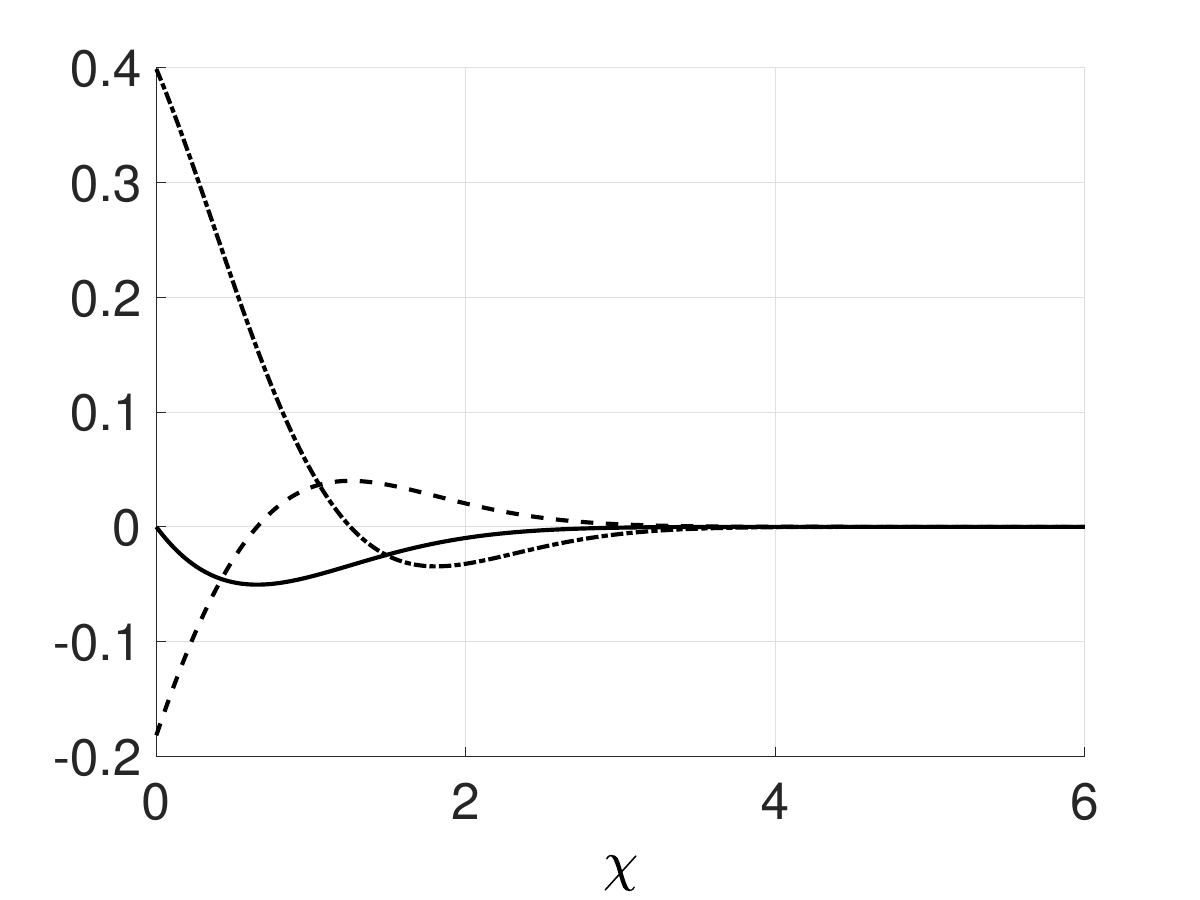}
\caption{The function $g^{(2)}_\Heter(\chi)$ (solid line), its first
  derivative (dashed line), and its second derivative (dash-dotted
  line).}
\label{fig:Heter_chi_eq_g2}
\end{figure}
\end{proof}
\subsection{Hierarchical Altruistic Estimation}\label{Appendix:Hier}
For a given realization $z$ of $Z$ the hierarchical altruism equation
(\ref{eq:Hier:altruism equation}) reduces to
\begin{equation}\label{eq: Appendix: hier tmp}
  x = \frac{1}{2} [\mu_Y + \expect(Y\mid Y>x) ]
\end{equation}
which, using (\ref{eq: Appendix: Eyx}), yields
\begin{equation}\label{eq:Hier:chi_eq}
  \frac{\phi(\chi)}{2[1-\Phi(\chi)]}
  - \chi = 0.
\end{equation}

As opposed to its heterarchical counterpart \eqref{eq:Heter:chi_eq},
the hierarchical equation \eqref{eq:Hier:chi_eq} does not lend itself
to a closed form solution. Resorting to numerical methods we find its
solution to be
\begin{subequations}
		\label{eq:hier_soln}
\begin{gather}
  \chi = \frac{1}{2}\wHier \label{eq:138a}
\intertext{with}
	\wHier \approx 1.224 \label{eq:wHier_value_app}.
\end{gather}
\end{subequations}
Furthermore, this solution is unique, as proved in the following
lemma.
Noting \eqref{eq:chi_denotation}, this solution of the
realization-based \eqref{eq:Hier:chi_eq} is equivalent to (\ref{eq:
  Hier estimators}), the solution of the general equation
\eqref{eq:Hier:altruism equation}.
\begin{lemma}\label{lem:chi_eq_Soln}
  \eqref{eq:hier_soln} is the unique solution to
  \eqref{eq:Hier:chi_eq}.
\end{lemma}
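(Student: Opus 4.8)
The plan is to clear the (strictly positive) denominator in \eqref{eq:Hier:chi_eq} and analyse the sign of the resulting smooth function. Because $1-\Phi(\chi)>0$ for every $\chi\in\mathbb{R}$, the roots of the left-hand side of \eqref{eq:Hier:chi_eq} coincide with the roots of
\begin{equation}
  g_\Hier(\chi) \dfn \phi(\chi) - 2\chi\,[1-\Phi(\chi)].
\end{equation}
First I would rule out $\chi\le 0$: there $-2\chi\,[1-\Phi(\chi)]\ge 0$, so $g_\Hier(\chi)\ge\phi(\chi)>0$ and no root is possible. Hence every solution is positive, and since \eqref{eq:hier_soln} exhibits $\chi=\tfrac12\wHier>0$ as a numerically obtained root, it remains only to show that $g_\Hier$ has exactly one zero on $(0,\infty)$.

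Next I would differentiate, using $\phi'(\chi)=-\chi\phi(\chi)$ and $\Phi'(\chi)=\phi(\chi)$, to obtain $g_\Hier'(\chi)=\chi\phi(\chi)-2[1-\Phi(\chi)]$ and $g_\Hier''(\chi)=(3-\chi^2)\phi(\chi)$. The sign of $g_\Hier''$ shows that $g_\Hier'$ is strictly increasing on $(0,\sqrt3)$ and strictly decreasing on $(\sqrt3,\infty)$, i.e.\ unimodal. At the left end $g_\Hier'(0)=-1<0$, whereas the classical Gaussian tail bound $1-\Phi(\chi)<\phi(\chi)/\chi$ for $\chi>0$ gives $g_\Hier'(\chi)>\phi(\chi)(\chi-2/\chi)$, which is strictly positive for every $\chi>\sqrt2$. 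Hence $g_\Hier'$ vanishes exactly once on the increasing branch, at a point $\chi^\ast\in(0,\sqrt2)$, and is strictly positive on all of $(\chi^\ast,\infty)$. Consequently $g_\Hier$ decreases strictly on $(0,\chi^\ast)$ and increases strictly on $(\chi^\ast,\infty)$, with a single global minimum at $\chi^\ast$.

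It then remains to pin down the sign of $g_\Hier$ at its minimum and at the ends. Since $g_\Hier(0)=\phi(0)>0$ and, by \eqref{eq:lhopital} together with $\phi(\chi)\to0$, we have $g_\Hier(\chi)\to0$ as $\chi\to\infty$, a function that is strictly increasing on $(\chi^\ast,\infty)$ and tends to $0$ must satisfy $g_\Hier(\chi)<0$ for every $\chi\ge\chi^\ast$ (in particular $g_\Hier(\chi^\ast)<0$); so there is no root in $[\chi^\ast,\infty)$. On $(0,\chi^\ast)$ the function decreases strictly from $\phi(0)>0$ to $g_\Hier(\chi^\ast)<0$, so by the intermediate value theorem it has exactly one root there. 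This produces a single positive root, which must be the numerically found value $\tfrac12\wHier$, establishing uniqueness. I expect the main obstacle to be the tail analysis: showing rigorously and self-containedly both that $g_\Hier'$ is eventually positive (so that $\chi^\ast$ is the \emph{only} critical point, ruling out a second sign change) and that $g_\Hier\to0$, which is precisely what converts ``single minimum'' into ``single sign change.'' The Mills-ratio inequality and the L'H\^{o}pital limit \eqref{eq:lhopital} are the tools I would lean on to close this gap.
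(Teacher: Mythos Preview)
Your proof is correct and follows the paper's approach almost exactly: both clear the denominator to study $g_\Hier(\chi)=\phi(\chi)-2\chi[1-\Phi(\chi)]$, use $g_\Hier''(\chi)=(3-\chi^2)\phi(\chi)$ to establish that $g_\Hier'$ is unimodal with a unique sign change, and then combine $g_\Hier(0)>0$ with $g_\Hier\to 0$ to force exactly one positive root. The only difference is tactical: the paper computes a third derivative and checks signs numerically at $\chi=\sqrt{3}$ (via a sign table), whereas you invoke the Mills-ratio bound $1-\Phi(\chi)<\phi(\chi)/\chi$ to certify $g_\Hier'>0$ for all $\chi>\sqrt{2}$, which is a touch more self-contained but otherwise equivalent.
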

\begin{proof}
  Clearly, (\ref{eq:Hier:chi_eq}) cannot have a non-positive
  solution. To prove that it cannot have a positive solution other
  than \eqref{eq:hier_soln}, we recast (\ref{eq:Hier:chi_eq}) as
\begin{equation}
g_\Hier(\chi) = 0
\end{equation}
where
\begin{equation}
g_\Hier(\chi) \dfn \phi(\chi) - 2\chi[1-\Phi(\chi)].
\end{equation}
The rest of the proof relies on the continuity of $g_\Hier(\chi)$ and
its derivatives, the first three of which are calculated to be
\begin{align}
  g'_\Hier(\chi) &= \chi\phi(\chi) - 2[1-\Phi(\chi)]
                   \label{eq:hier_1st_der}\\
  g''_\Hier(\chi) &= \phi(\chi)(3-\chi^2)
                    \label{eq:hier_2nd_der}\\
  g'''_\Hier(\chi) & = \chi\phi(\chi^2 - 5)
                     \label{eq:hier_3rd_der}.
\end{align}
To facilitate the ensuing development, we summarize in Table~\ref{tab:
  chi hier} the signs of $g_\Hier(\chi)$ and its first three
derivatives at $\chi=0$ and at $\chi=\sqrt{3}$.
\begin{table}[tbhp]
  \caption{Signs of $g_\Hier(\chi)$ and its first three derivatives at $\chi=0$ and $\chi=\sqrt{3}$.}
\begin{center}
\begin{tabular}{ccc}
  \hline\hline
  & $\chi=0$ & $\chi=\sqrt{3}$ \\ 
  \hline
  $\sign g_\Hier(\chi)$  & $+$ & $-$ \\ 
  $\sign g'_\Hier(\chi)$  & $-$ & $+$ \\ 
  $\sign g''_\Hier(\chi)$  & $+$ &  0\\
  $\sign g'''_\Hier(\chi)$  & 0  &  $-$\\
  \hline\hline
\end{tabular}
\end{center}
\label{tab: chi hier}
\end{table}
The only root of $g''_\Hier(\chi)$ in $[0,\infty)$ is $\chi=\sqrt{3}$.
Since $g'''_\Hier(\sqrt{3}) < 0$, it follows that this root is the
only maximum point of $g'_\Hier(\chi)$ in $[0,\infty)$, rendering
$g'_\Hier(\chi)$ monotonically non-increasing for $\chi > \sqrt{3}$.
Furthermore, since $g'_\Hier(\sqrt{3}) > 0$ and $g'_\Hier(\chi)\to 0$
as $\chi\to\infty$, it follows that $g'_\Hier(\chi) > 0$ in
$[\sqrt{3},\infty)$, rendering $g_\Hier(\chi)$ monotonically strictly
increasing in that interval.  Now, using \eqref{eq:lhopital},
it is easy to see that $g_\Hier(\chi)\to 0$ as $\chi\to\infty$.  Since
$g_\Hier(\sqrt{3}) < 0$, we conclude that $g_\Hier(\chi)$ does not
possess any root in $[\sqrt{3},\infty)$.

To complete the proof, we need to show that $g_\Hier(\chi)$ does not
possess any root in $(0,\sqrt{3})$ other than \eqref{eq:hier_soln}
(as Table~\ref{tab: chi hier} shows that both 0 and $\sqrt{3}$ are not
roots of $g_\Hier(\chi)$). Since $g'_\Hier(0)<0$ and
$g'_\Hier(\sqrt{3})>0$, and as $g'_\Hier(\chi)$ does not possess an
extremum in $(0,\sqrt{3})$, it must be monotonically increasing in
that interval, crossing zero at a single point in $(0,\sqrt{3})$. Thus,
$g_\Hier(\chi)$ can have only a single extremal point in that
interval. This extremal point is a minimum point since
$g''_\Hier(\chi)>0$ in $(0,\sqrt{3})$. Since $g_\Hier(0)>0$ and
$g_\Hier(\sqrt{3})<0$, we thus conclude that $g_\Hier(\chi)$ can cross
zero only once in $(0,\sqrt{3})$. This unique crossing is at
\eqref{eq:hier_soln}.

For illustrative purposes, the function $g_\Hier(\chi)$ is depicted in Fig.~\ref{fig:Hier_chi_eq}.
\begin{figure}[tbhp]
\centering
\includegraphics[width = \linewidth]{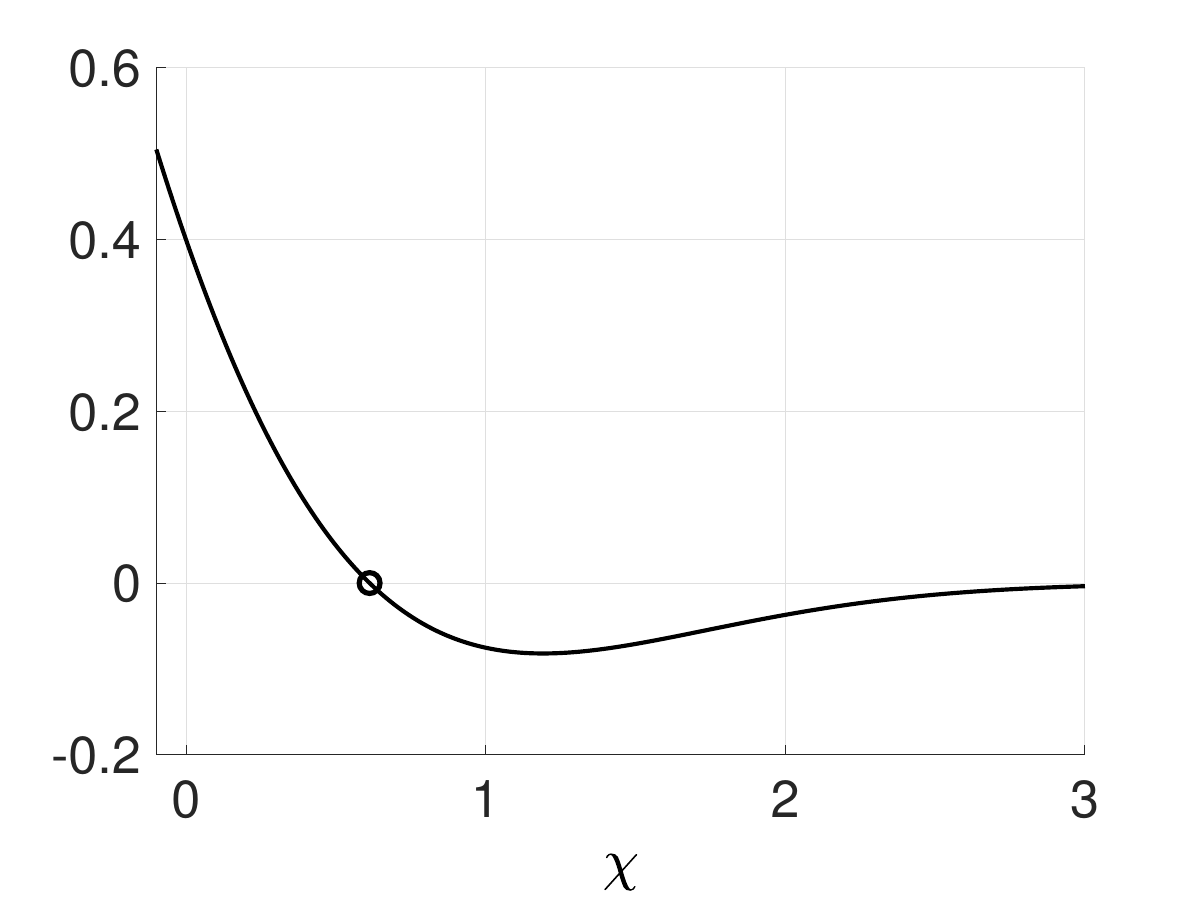}
\caption{The function $g_\Hier(\chi)$. Superimposed is its numerically
  calculated root, $\chi\approx 0.612$ (circle).}
\label{fig:Hier_chi_eq}
\end{figure}
\end{proof}

\end{appendices}

\section*{Acknowledgment}
The authors thank Vadim Indelman of the Technion's Department of
Aerospace Engineering for his useful suggestions.

\bibliographystyle{IEEEtran}
\bibliography{bib}

\begin{thebibliography}{10}
\providecommand{\url}[1]{#1}
\csname url@samestyle\endcsname
\providecommand{\newblock}{\relax}
\providecommand{\bibinfo}[2]{#2}
\providecommand{\BIBentrySTDinterwordspacing}{\spaceskip=0pt\relax}
\providecommand{\BIBentryALTinterwordstretchfactor}{4}
\providecommand{\BIBentryALTinterwordspacing}{\spaceskip=\fontdimen2\font plus
\BIBentryALTinterwordstretchfactor\fontdimen3\font minus
  \fontdimen4\font\relax}
\providecommand{\BIBforeignlanguage}[2]{{%
\expandafter\ifx\csname l@#1\endcsname\relax
\typeout{** WARNING: IEEEtran.bst: No hyphenation pattern has been}%
\typeout{** loaded for the language `#1'. Using the pattern for}%
\typeout{** the default language instead.}%
\else
\language=\csname l@#1\endcsname
\fi
#2}}
\providecommand{\BIBdecl}{\relax}
\BIBdecl

\bibitem{shaferman_oshman_JGCD:2016}
V.~Shaferman and Y.~Oshman, ``Stochastic cooperative interception using
  information sharing based on engagement staggering,'' \emph{Journal of
  Guidance, Control, and Dynamics}, vol.~39, no.~9, pp. 2127--2141, September
  2016.

\bibitem{hayoun_EuroGNC:2015}
S.~Y. Hayoun and T.~Shima, ``Cooperative 2-on-1 bounded-control linear
  differential games,'' in \emph{Advances in Aerospace Guidance, Control, and
  Dynamics, Selected Papers of the Third CEAS Specialist Conference on
  Guidance, Navigation and Control}, J.~Bordeneuve-Guibe, A.~Drouin, and
  C.~Roos, Eds.\hskip 1em plus 0.5em minus 0.4em\relax Toulouse, France:
  Springer, April 2015.

\bibitem{Russian_IFAC}
S.~S. Kumkov, S.~Le~M{\'{e}}nec, and V.~S. Patsko, ``Solvability sets in
  pursuit problem with two pursuers and one evader,'' in \emph{Proceedings of
  the 19th World Congress}.\hskip 1em plus 0.5em minus 0.4em\relax Cape Town,
  South Africa: The International Federation of Automatic Control, August 2014,
  pp. 1543--1549.

\bibitem{OSPA}
D.~Schuhmacher, B.-T. Vo, and B.-N. Vo, ``A consistent metric for performance
  evaluation of multi-object filters,'' \emph{Signal Processing, IEEE
  Transactions on}, vol.~56, no.~8, pp. 3447--3457, Aug 2008.

\bibitem{birds}
M.~Guerriero, L.~Svensson, D.~Svensson, and P.~Willett, ``Shooting two birds
  with two bullets: How to find {M}inimum {M}ean {OSPA} estimates,'' in
  \emph{Information Fusion (FUSION), 2010 13th Conference on}, July 2010, pp.
  1--8.

\bibitem{Voronoi_book}
A.~Okabe, B.~Boots, K.~Sugihara, S.~N. Chiu, and D.~G. Kendall,
  \emph{Locational Optimization Through Voronoi Diagrams}.\hskip 1em plus 0.5em
  minus 0.4em\relax John Wiley \& Sons, Inc., 2008, pp. 531--584.

\bibitem{Voronoi_old}
M.~Iri, K.~Murota, and T.~Ohya, ``\BIBforeignlanguage{English}{A fast
  {V}oronoi-diagram algorithm with applications to geographical optimization
  problems},'' in \emph{\BIBforeignlanguage{English}{Lecture Notes in Control
  and Information Sciences}}, vol.~59, 1984, pp. 273--288.

\bibitem{CVT}
Q.~Du, V.~Faber, and M.~Gunzburger, ``\BIBforeignlanguage{English}{Centroidal
  {V}oronoi tessellations: Applications and algorithms},''
  \emph{\BIBforeignlanguage{English}{SIAM Review}}, vol.~41, no.~4, pp.
  637--676, 1999.

\bibitem{motion_coord}
S.~Mart\'{\i}nez, J.~Cort{}\'{e}s, and F.~Bullo,
  ``\BIBforeignlanguage{English}{Motion coordination with distributed
  information},'' \emph{\BIBforeignlanguage{English}{IEEE Control Systems
  Magazine}}, vol.~27, no.~4, pp. 75--88, 2007.

\bibitem{Greene:2012}
W.~H. Greene, \emph{Econometric Analysis}, 7th~ed.\hskip 1em plus 0.5em minus
  0.4em\relax New Jersey: Prentice Hall, 2012.

\bibitem{folded}
F.~C. Leone, L.~S. Nelson, and R.~B. Nottingham, ``The folded normal
  distribution,'' \emph{Technometrics}, vol.~3, no.~4, pp. 543--550, 1961.

\bibitem{Rayleigh-Ritz}
R.~Horn and C.~Johnson, \emph{Matrix Analysis}.\hskip 1em plus 0.5em minus
  0.4em\relax Cambridge University Press, 1985, pp. 176--177.

\end{thebibliography}

\end{document}